\NeedsTeXFormat{LaTeX2e}
\documentclass[a4paper,notitlepage,fleqn,12pt]{article}
\usepackage[tbtags]{amsmath}
\usepackage{amsfonts}
\usepackage{amsthm}
\usepackage{color}
\usepackage{amsfonts}
\usepackage{graphicx,amssymb}
\usepackage{natbib}
\bibliographystyle{apalike}

\numberwithin{equation}{section}
\newtheorem{thm}{Theorem}
\newtheorem{lemma}{Lemma}
\newtheorem{assum}{Assumption}

\DeclareMathOperator*{\argmin}{argmin}

\newcommand{\bm}[1]{\mbox{\boldmath{$#1$}}}

\setlength{\topmargin}{-3pt}
\setlength{\headheight}{0in}
\setlength{\headsep}{0in}
\setlength{\voffset}{0in}
\setlength{\oddsidemargin}{0in}
\setlength{\textwidth}{460pt}

\setlength{\textheight}{710pt}
\baselineskip=15.5pt

\renewcommand{\baselinestretch}{1.6}

\title{Quantile correlations and quantile autoregressive modeling}
\author{Guodong Li, Yang Li and Chih-Ling Tsai\\
\textit{University of Hong Kong and University of California at Davis}}

\begin{document}
\maketitle
\begin{abstract}
In this paper, we propose two important measures,  quantile correlation (QCOR)  and  quantile partial correlation (QPCOR). We then  apply them to quantile autoregressive (QAR) models, and introduce two valuable quantities, the quantile autocorrelation function (QACF) and the quantile partial  autocorrelation function (QPACF). This allows us to extend the classical Box-Jenkins approach to quantile autoregressive  models.
Specifically, the QPACF of an observed time series can be employed to identify the autoregressive order, while the QACF of residuals obtained from the fitted model can be used to assess the model adequacy. We not only demonstrate the asymptotic properties of QCOR, QPCOR, QACF, and
PQACF, but also show the large sample results of the QAR estimates and the quantile version of  the Ljung-Box test. Simulation studies indicate that the proposed methods perform well in finite samples, and an empirical example is presented to illustrate usefulness.
\end{abstract}
{\it Keywords and phrases:} Autocorrelation function; Box-Jenkins method;  Quantile correlation; Quantile partial correlation;  Quantile autoregressive model

\newpage
\section{Introduction}

In the last decade, quantile regression has attracted considerable attention.
There are two major reasons for such popularity. The first is that
quantile regression estimation \citep{Koenker_Bassett1978} can be robust to non-Gaussian or heavy-tailed data. In addition, it includes the commonly used least absolute deviation (LAD) method as a special case. The second is that
 the quantile regression model allows practitioners to provide more easily interpretable regression
estimates obtained via various quantiles $\tau\in[0,1]$.
More references about quantile regression estimations and interpretations can be found in the seminal book of  \cite{Koenker2005}. Further extension of quantile regression to various model and data structures have been found in the literature,
 e.g., \cite{Machado_Silva2005} for count data, \cite{Mu_He2007} for power transformed data, \cite{Peng_Huang2008} and \cite{Wang_Wang2009} for survival analysis, \cite{He_Liang2000} and \cite{Wei_Carroll2009} for regression with measurement errors, \cite{Ando_Tsay2011} for regression with augmented factors, and \cite{Kai_Li_Zou2011} for semiparametric varying-coefficient partially linear models, among others.

In addition to the regression context,
the quantile technique has been employed to
 the field of time series; see, for example, \cite{Koul_Saleh1995} and \cite{Cai_Stander_Davies2012} for autoregressive (AR) models, \cite{Ling_McAleer2004} for unstable AR models, and \cite{Xiao_Koenker2009} for generalized autoregressive conditional heteroscedastic (GARCH) models.
It is noteworthy that
\cite{Koenker_Xiao2006} established important statistical properties for
quantile autoregressive (QAR) models, and suggested a modified Bayesian information criterion (BIC) to select the order of QAR models. Their findings have expanded the classical AR model into a new era,
which motivates us to extend the classical Box-Jenkins' approach (i.e., model identification, model  parameter estimation, and model diagnostics) from AR to QAR models. In the classical AR model, it is known that model identification usually relies on
the partial autocorrelation function (PACF) of the observed time series, while model diagnosis commonly depends on the autocorrelation function (ACF) of model residuals.
Detailed illustrations of model identification and diagnosis can be found in \cite{Box_Jenkins_Reinsel2008}.

The aim of this paper is to introduce two novel measures to
examine the  linear and partial linear relationships between any two random variables for
the given quantile $\tau\in[0,1]$. We name them quantile correlation (QCOR) and  quantile partial correlation (QPCOR). Based on these two measures, we
propose the quantile partial  autocorrelation function (QPACF) and the quantile autocorrelation function (QACF) to identify the order of the QAR model and to assess model adequacy, respectively.
It is noteworthy that the application of QCOR and QPCOR is not limited to QAR models. They can be used broadly as the classical correlation and partial correlation measures in various contexts.

The rest of this article is organized as follows. Section 2
introduces QCOR and QPCOR. Furthermore, the asymptotic properties of their sample estimators are established. Section 3
obtains QPACF and its large sample property for identifying the order of QAR model.
In addition, the autoregressive parameter estimator and its asymptotic distribution are demonstrated.
Moreover, QACF and its resulting test statistics, together with their asymptotic results, are provided to examine the model adequacy.
Section 4 conducts simulation experiments to study the finite sample performance of the proposed methods, and also presents an empirical example to demonstrate  usefulness. Finally,  we  conclude the article with a brief discussion in Section 5.  All technical proofs of lemmas and theorems are relegated to the Appendix.

\section{Correlations}
\subsection{Quantile correlation and quantile partial correlation}

For random variables $X$ and $Y$, let $Q_{\tau,Y}$ be the $\tau$th unconditional quantile of $Y$
and $Q_{\tau,Y}(X)$ be the $\tau$th quantile of $Y$ conditional on $X$.
One can show that $Q_{\tau,Y}(X)$ is independent of $X$, i.e. $Q_{\tau,Y}(X)=Q_{\tau,Y}$ with probability one, if and only if the random variables $I(Y-Q_{\tau,Y}>0)$ and $X$ are independent, where $I(\cdot)$ is the indicated function. This fact has been
used by  \cite{He_Zhu2003} and  \cite{Mu_He2007}, and it also motivates us to
define the quantile covariance given below.
For $0<\tau<1$, define
\[
\textrm{qcov}_{\tau}\{Y,X\}=\textrm{cov}\{I(Y-Q_{\tau,Y}>0),X\}=E\{\psi_{\tau}(Y-Q_{\tau,Y})(X-EX)\},
\]
where the function $\psi_\tau(w)=\tau-I(w<0)$.
Subsequently, the quantile correlation can be defined as follows,
\begin{equation}\label{eqc}
\textrm{qcor}_{\tau}\{Y,X\}=\frac{\textrm{qcov}_{\tau}\{Y,X\}}{\sqrt{\textrm{var}\{\psi_{\tau}(Y-Q_{\tau,Y})\}\textrm{var}(X)}} =\frac{E\{\psi_{\tau}(Y-Q_{\tau,Y})(X-EX)\}}{\sqrt{(\tau-\tau^2)\sigma_X^2}},
\end{equation}
where $\sigma_X^2=\textrm{var}(X)$.

In the simple linear regression with the quadratic loss function, there is
a nice relationship between the slope and correlation. Hence, it is of interest to
find a connection between the quantile slope and $\textrm{qcov}_{\tau}\{Y,X\}$. To this end,
 consider a simple quantile linear regression,
\[
(a_0,b_0)=\argmin_{a,b}E[\rho_{\tau}(Y-a-bX)],
\]
in which one attempts to approximate $Q_{\tau,Y}(X)$ by a linear function $a_0+b_0X$ (see \citealp{Koenker2005}), where $\rho_\tau(w)=w[\tau-I(w<0)]$.
Then, we obtain the relationship between $b_0$ and $\textrm{qcor}_{\tau}\{Y,X\}$ given below.

\begin{lemma}\label{lem1}
Suppose that random variables $X$ and $Y$ have a joint density and $EX^2<\infty$. Then the values of $(a_0,b_0)$ are unique, and the quantity $b_0=0$ if and only if the quantile correlation $\textrm{qcor}_{\tau}\{Y,X\}=0$.
\end{lemma}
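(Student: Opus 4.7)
The plan is to derive the first-order (normal) equations of the convex program and then read off the claimed equivalence directly from them. Define $g(a,b) = E[\rho_\tau(Y - a - bX)]$. Since $\rho_\tau$ is Lipschitz and convex, $g$ is convex; under the joint density on $(X,Y)$ the event $\{Y = a + bX\}$ has probability zero, so the kink of $\rho_\tau$ is smoothed out in expectation and a dominated-convergence argument gives
\[
\nabla g(a,b) = -\bigl(E[\psi_\tau(Y - a - bX)],\; E[\psi_\tau(Y - a - bX)\, X]\bigr).
\]
Any minimizer therefore satisfies the two normal equations
\[
E[\psi_\tau(Y - a_0 - b_0 X)] = 0, \qquad E[\psi_\tau(Y - a_0 - b_0 X)\, X] = 0,
\]
and these will carry the entire argument.

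I would handle uniqueness next. Because $\rho_\tau$ is strictly convex at its single kink, if two distinct minimizers $(a_1,b_1) \ne (a_2,b_2)$ existed, the convex combination would strictly lower the loss on the set $\{(Y-a_1-b_1X)(Y-a_2-b_2X) < 0\}$. The joint-density assumption, together with the non-degeneracy of $X$ (implied by the existence of a marginal density and $EX^2 < \infty$), forces this set to have positive probability, contradicting minimality; hence $(a_0, b_0)$ is unique. Equivalently, one can verify that the Hessian $E[f_{Y|X}(a + bX \mid X)(1,X)(1,X)^{\!\top}]$ is positive definite.

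The two directions now follow from the normal equations. If $b_0 = 0$, the first equation reduces to $\tau - P(Y < a_0) = 0$, identifying $a_0 = Q_{\tau,Y}$ (uniquely, since $Y$ has a density). Substituting into the second equation and subtracting $EX$ times the first yields
\[
\textrm{qcov}_\tau\{Y,X\} = E[\psi_\tau(Y - Q_{\tau,Y})(X - EX)] = 0,
\]
so $\textrm{qcor}_\tau\{Y,X\} = 0$. Conversely, if $\textrm{qcor}_\tau\{Y,X\} = 0$, I would plug in the candidate $(a^*,b^*) = (Q_{\tau,Y}, 0)$: the first normal equation holds by definition of the quantile, and the second holds because expanding $\textrm{qcov}_\tau$ gives $E[\psi_\tau(Y - Q_{\tau,Y})\,X] = 0$. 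Uniqueness then forces $(a_0,b_0) = (Q_{\tau,Y}, 0)$, so $b_0 = 0$.

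The main obstacle is the uniqueness/strict-convexity step, since it is the only place where the joint density and $EX^2<\infty$ hypotheses are genuinely used, and it requires care to pass the non-smoothness of $\rho_\tau$ through the expectation and to rule out a continuum of minimizers that could appear if the support of $(X,Y)$ were too thin to identify both intercept and slope. Everything after that is algebraic manipulation of the normal equations.
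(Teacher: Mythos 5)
Your proposal follows essentially the same route as the paper's proof: derive the normal equations $E[\psi_\tau(Y-a_0-b_0X)]=0$ and $E[\psi_\tau(Y-a_0-b_0X)X]=0$ from differentiability of the convex objective, establish uniqueness of the minimizer, and then obtain both directions of the equivalence by identifying $(Q_{\tau,Y},0)$ as the unique solution when either $b_0=0$ or $\textrm{qcor}_\tau\{Y,X\}=0$. Your uniqueness step (strict improvement at the midpoint of two putative minimizers) is a cosmetic variant of the paper's computation, which shows $h(a_1,b_1)-h(a_0,b_0)$ equals a sum of nonnegative terms supported on the same set $\{(Y-a_0-b_0X)(Y-a_1-b_1X)<0\}$; both arguments rest on that set having positive probability when the two lines differ, which you and the paper justify at the same level of detail.
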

\noindent
It is noteworthy that the proposed quantile covariance here does not enjoy the symmetry property of the classical covariance, i.e., $\textrm{qcov}_{\tau}(Y,X)\neq\textrm{qcov}_{\tau}(X,Y)$. This is because
the first argument of the quantile covariance or the quantile correlation is related to the $\tau$th quantile, while the second argument is the same as that of the classical covariance.
Accordingly,  $\textrm{qcor}_{\tau}(Y,X)\neq\textrm{qcor}_{\tau}(X,Y)$.

Suppose that a quantile linear regression model has the response $Y$, a
$q\times 1$ vector of covariate $\mathbf{Z}$,
and  an  additional covariate $X$. In the classical regression model, one can construct the partial correlation to measure the linear relationship between variables $Y$ and $X$ after adjusting (or controlling) vector $\mathbf{Z}$ (e.g., see \citealp{Chatterjee_Hadi2006}). This motivates us to
propose the quantile partial correlation function. To this end,
let
\[
(\alpha_1,\beta_1^{\prime})=\argmin_{\alpha,\beta}E(X-\alpha-\beta^{\prime}\mathbf{Z})^2,
\]
where $(\alpha,\beta^{\prime})^{\prime}$ is a vector of unknown parameters.
Accordingly, $\alpha_1+\beta_1^{\prime}\mathbf{Z}$ is the linear effect of $\mathbf{Z}$ on $X$.
Next, consider
\[
(\alpha_2,\beta_2^{\prime})=\argmin_{\alpha,\beta}E[\rho_{\tau}(Y-\alpha-\beta^{\prime}\mathbf{Z})].
\]
As a result, $\alpha_2+\beta_2^{\prime}\mathbf{Z}$ is the linear effect of $\mathbf{Z}$ on the quantile $Y$ (i.e., the linear approximation of $Q_{\tau,Y}(\mathbf{Z})$).
It can also  be shown that $E(X-\alpha_1-\beta_1^{\prime}\mathbf{Z})=0$, $E[\psi_{\tau}(Y-\alpha_2-\beta_3^{\prime}\mathbf{Z})]=0$ and $E[\psi_{\tau}(Y-\alpha_2-\beta_3^{\prime}\mathbf{Z})\mathbf{Z}]=0$ if the random vector $(X,Y,\mathbf{Z}^{\prime})^{\prime}$ satisfies the conditions stated in the forthcoming Lemma \ref{lem2}. Using these facts, we define
 the quantile partial correlation as follows,
\begin{align}
\begin{split}\label{eqd}
\textrm{qpcor}_{\tau}\{Y,X|\mathbf{Z}\} &=\frac{\textrm{cov}\{\psi_{\tau}(Y-\alpha_2-\beta_2^{\prime}\mathbf{Z}),X-\alpha_1-\beta_1^{\prime}\mathbf{Z}\}} {\sqrt{\textrm{var}\{\psi_{\tau}(Y-\alpha_2-\beta_2^{\prime}\mathbf{Z})\} \textrm{var}\{X-\alpha_1-\beta_1^{\prime}\mathbf{Z}\}}}\\
&=\frac{E[\psi_{\tau}(Y-\alpha_2-\beta_2^{\prime}\mathbf{Z})(X-\alpha_1-\beta_1^{\prime}\mathbf{Z})]} {\sqrt{(\tau-\tau^2)E(X-\alpha_1-\beta_1^{\prime}\mathbf{Z})^2}}\\
&=\frac{E[\psi_{\tau}(Y-\alpha_2-\beta_2^{\prime}\mathbf{Z})X]} {\sqrt{(\tau-\tau^2)\sigma_{X|\mathbf{Z}}^2}},
\end{split}
\end{align}
where $\sigma_{X|\mathbf{Z}}^2=E(X-\alpha_1-\beta_1^{\prime}\mathbf{Z})^2$.
This indicates that the covariate $X$ has no additional linear contribution to the quantile
response $Y$ if $\alpha_2+\beta_2^{\prime}\mathbf{Z}=\alpha_3+\beta_3^{\prime}\mathbf{Z}+\gamma_3X$ with probability one, where
\[
(\alpha_3,\beta_3^{\prime},\gamma_3)=\argmin_{\alpha,\beta,\gamma} E[\rho_{\tau}(Y-\alpha-\beta^{\prime}\mathbf{Z}-\gamma X)].
\]
This leads to the following lemma.
\begin{lemma}\label{lem2}
Suppose that the random vector $(X,Y,\mathbf{Z}^{\prime})^{\prime}$ has a joint density with $EX^2<\infty$ and $E\|\mathbf{Z}\|^2<\infty$, where $\|\cdot\|$ is the Euclid norm. Then $(\alpha_3,\beta_3^{\prime},\gamma_3)=(\alpha_2,\beta_2^{\prime},0)$ if and only if the quantile partial correlation $\textrm{qpcor}_{\tau}\{Y,X|\mathbf{Z}\}=0$.
\end{lemma}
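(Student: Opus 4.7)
The approach is to read both directions of the lemma off the first-order conditions of a single convex program. Set $H(\alpha,\beta,\gamma):=E[\rho_{\tau}(Y-\alpha-\beta^{\prime}\mathbf{Z}-\gamma X)]$, which is convex in $(\alpha,\beta^{\prime},\gamma)^{\prime}$ since $\rho_{\tau}$ is. Under the joint density assumption on $(X,Y,\mathbf{Z}^{\prime})^{\prime}$, the conditional distribution of $Y$ given $(X,\mathbf{Z})$ admits a density, which smooths the kink in $\rho_{\tau}$ and, via dominated convergence together with $EX^2<\infty$ and $E\|\mathbf{Z}\|^2<\infty$, makes $H$ continuously differentiable with partial derivatives $-E[\psi_{\tau}(Y-\alpha-\beta^{\prime}\mathbf{Z}-\gamma X)(1,\mathbf{Z}^{\prime},X)^{\prime}]$. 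By convexity, $(\alpha,\beta^{\prime},\gamma)$ minimizes $H$ if and only if all three score components vanish there.

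Next, I would feed in what we already know about $(\alpha_2,\beta_2^{\prime})$. By definition it minimizes the reduced quantile criterion (without $X$), so its first-order conditions give $E[\psi_{\tau}(Y-\alpha_2-\beta_2^{\prime}\mathbf{Z})]=0$ and $E[\psi_{\tau}(Y-\alpha_2-\beta_2^{\prime}\mathbf{Z})\mathbf{Z}]=0$. Thus the candidate point $(\alpha_2,\beta_2^{\prime},0)$ automatically satisfies the first two optimality equations for $H$, and satisfies the third, namely the $\gamma$-score $E[\psi_{\tau}(Y-\alpha_2-\beta_2^{\prime}\mathbf{Z})X]=0$, if and only if, after dividing by the positive denominator $\sqrt{(\tau-\tau^2)\sigma_{X|\mathbf{Z}}^2}$, we have $\textrm{qpcor}_{\tau}\{Y,X|\mathbf{Z}\}=0$. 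Positivity of this denominator uses that $X$ is not a.s.\ a linear function of $(1,\mathbf{Z}^{\prime})^{\prime}$, which is implicit in $\sigma_{X|\mathbf{Z}}^2>0$.

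With this equivalence in hand, the lemma becomes essentially a two-line argument. For the forward direction, if $(\alpha_3,\beta_3^{\prime},\gamma_3)=(\alpha_2,\beta_2^{\prime},0)$ then the $\gamma$-score must vanish there, yielding $\textrm{qpcor}_{\tau}\{Y,X|\mathbf{Z}\}=0$. Conversely, if the quantile partial correlation is zero then $(\alpha_2,\beta_2^{\prime},0)$ satisfies all three optimality equations of $H$ and hence, by uniqueness of the minimizer, coincides with $(\alpha_3,\beta_3^{\prime},\gamma_3)$. The main obstacle is justifying this uniqueness cleanly: I would differentiate once more inside the expectation to obtain the Hessian $E[f_{Y|X,\mathbf{Z}}(\alpha+\beta^{\prime}\mathbf{Z}+\gamma X)\,(1,\mathbf{Z}^{\prime},X)^{\prime}(1,\mathbf{Z}^{\prime},X)]$ and argue strict positive-definiteness from positivity of the conditional density together with non-degeneracy of $(1,\mathbf{Z}^{\prime},X)^{\prime}$; this is the same uniqueness ingredient that underpins Lemma \ref{lem1} in the special case $\mathbf{Z}\equiv 1$.
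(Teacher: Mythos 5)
Your overall route is the same as the paper's: both arguments characterize the minimizers of the convex quantile objectives through their first-order conditions, observe that $(\alpha_2,\beta_2^{\prime},0)$ automatically satisfies the intercept- and $\mathbf{Z}$-components of the full score equation, and reduce the lemma to whether the $\gamma$-score $E[\psi_{\tau}(Y-\alpha_2-\beta_2^{\prime}\mathbf{Z})X]$ vanishes, which after dividing by $\sqrt{(\tau-\tau^2)\sigma_{X|\mathbf{Z}}^2}$ is exactly $\textrm{qpcor}_{\tau}\{Y,X|\mathbf{Z}\}=0$. The paper merely re-centers first, setting $Y^*=Y-\alpha_2-\beta_2^{\prime}\mathbf{Z}$ and working with $(\alpha_4,\beta_4^{\prime},\gamma_4)=(\alpha_3-\alpha_2,\beta_3-\beta_2,\gamma_3)$, which is a cosmetic change of variables. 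The one substantive divergence is the uniqueness step, and there your plan is weaker than what you claim: the uniqueness ingredient in Lemma \ref{lem1} is \emph{not} a Hessian positive-definiteness argument. The paper never differentiates twice; instead it compares the objective at two purported minimizers using the exact identity \eqref{proof_lem1_eq1}, writes the difference as an expectation of the two nonnegative remainder terms $(Y_0-\xi)I(0>Y_0>\xi)$ and $(\xi-Y_0)I(0<Y_0<\xi)$, and forces both to vanish almost surely using only continuity of the relevant random variable. Your proposed Hessian $E[f_{Y|X,\mathbf{Z}}(\alpha+\beta^{\prime}\mathbf{Z}+\gamma X)(1,\mathbf{Z}^{\prime},X)^{\prime}(1,\mathbf{Z}^{\prime},X)]$ requires a second differentiation under the expectation and, more importantly, strict positivity of the conditional density along the minimizing hyperplane; neither is granted by the hypothesis that $(X,Y,\mathbf{Z}^{\prime})^{\prime}$ merely has a joint density. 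If you keep your version, you should either add a positivity assumption on $f_{Y|X,\mathbf{Z}}$ or replace the Hessian step with the paper's first-order comparison argument, which gets uniqueness under the stated conditions.
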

\noindent
Since the true $\textrm{qcor}_{\tau}$ and $\textrm{qpcor}_{\tau}$  are often unknown in practice, we introduce their sample versions given below.

\subsection{Sample quantile correlation and sample quantile partial correlation}

Suppose that the data $\{(Y_i,X_i,\mathbf{Z}_i^{\prime})^{\prime},i=1,...,n\}$ are  identically and independently  generated from a distribution of
$(Y,X,\mathbf{Z}^{\prime})^{\prime}$. Let $\widehat{Q}_{\tau,Y}=\inf\{y: F_n(y)\geq \tau\}$ be the sample $\tau$th quantile of $Y_1,...,Y_n$, where $F_n(y)=n^{-1}\sum_{i=1}^nI(Y_i\leq y)$ is the empirical distribution function. Based on equation (\ref{eqc}), the sample estimate of the quantile correlation $\textrm{qcor}_{\tau}\{Y,X\}$ is defined as

\begin{equation}\label{seqc}
\widehat{\textrm{qcor}}_{\tau}\{Y,X\} =\frac{1}{\sqrt{(\tau-\tau^2)\widehat{\sigma}_X^2}}\cdot \frac{1}{n}\sum_{i=1}^n \psi_{\tau}(Y_i-\widehat{Q}_{\tau,Y})(X_i-\bar{X}),
\end{equation}
where $\bar{X}=n^{-1}\sum_{i=1}^nX_i$, and $\widehat{\sigma}_X^2=n^{-1}\sum_{i=1}^n(X_i-\bar{X})^2$.

To study the asymptotic property of
$\widehat{\textrm{qcor}}_{\tau}\{Y,X\}$,
denote $f_Y(\cdot)$ and $f_{Y|X}(\cdot)$ as the density of $Y$ and the conditional density of $Y$ given $X$,
respectively. In addition,
let $\mu_X=E(X)$, $\mu_{X|Y}=E[f_{Y|X}({Q}_{\tau,Y})X]/f_Y(Q_{\tau,Y})$, $\Sigma_{11}=E(X-\mu_X)^4-\sigma_X^4$, $$\Sigma_{12}=E[\psi_{\tau}(Y-{Q}_{\tau,Y})(X-\mu_{X|Y})]^2-[\textrm{qcov}_{\tau}\{Y,X\}]^2,$$ $$\Sigma_{13}=E[\psi_{\tau}(Y-{Q}_{\tau,Y})(X-\mu_{X|Y})(X-\mu_X)^2]-\sigma_X^2\cdot\textrm{qcov}_{\tau}\{Y,X\},
$$ and
\[
\Omega_1=\frac{1}{\tau-\tau^2}\left[
\frac{\Sigma_{11}(\textrm{qcov}_{\tau}\{Y,X\})^2}{4\sigma_X^6} -\frac{\Sigma_{13}\cdot\textrm{qcov}_{\tau}\{Y,X\}}{\sigma_X^4} +\frac{\Sigma_{12}}{\sigma_X^2}
\right],
\]
where $\sigma_X^2$ is defined as in the previous subsection.
Then, we obtain the following result.

\begin{thm}\label{thm1}
Suppose that $E(X)^4<\infty$ and there exists a $\pi>0$ such that the density $f_Y(\cdot)$ is continuous and the conditional density $f_{Y|X}(\cdot)$ is uniformly integrable on $[{Q}_{\tau,Y}-\pi,{Q}_{\tau,Y}+\pi]$. Then
\[
\sqrt{n}\left(\widehat{\textrm{qcorr}}_{\tau}\{Y,X\}-\textrm{qcorr}_{\tau}\{Y,X\}\right)\rightarrow_dN(0,\Omega_1).
\]
\end{thm}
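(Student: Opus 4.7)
The plan is to reduce $\sqrt{n}(\widehat{\textrm{qcor}}_{\tau}\{Y,X\}-\textrm{qcor}_{\tau}\{Y,X\})$ to a sum of i.i.d.\ terms by linearizing both the numerator and the denominator. Writing $\hat{C}_n = n^{-1}\sum_i \psi_{\tau}(Y_i-\widehat{Q}_{\tau,Y})(X_i-\bar X)$ and $C=\textrm{qcov}_{\tau}\{Y,X\}$, the delta method for the smooth map $(c,s)\mapsto c/\sqrt{(\tau-\tau^2)s}$ gives
\[
\sqrt{n}\bigl(\widehat{\textrm{qcor}}_{\tau}-\textrm{qcor}_{\tau}\bigr)=\frac{1}{\sqrt{(\tau-\tau^2)\sigma_X^2}}\Bigl[\sqrt{n}(\hat C_n-C)-\frac{C}{2\sigma_X^2}\sqrt{n}(\hat\sigma_X^2-\sigma_X^2)\Bigr]+o_p(1),
\]
so the core task is to produce an i.i.d.\ Bahadur-type expansion of $\hat C_n-C$; the contribution of $\hat\sigma_X^2-\sigma_X^2$ is the routine empirical second-moment expansion $n^{-1}\sum_i[(X_i-\mu_X)^2-\sigma_X^2]+o_p(n^{-1/2})$.

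The main obstacle is the non-smoothness of $\psi_\tau$ evaluated at the data-dependent quantile $\widehat Q_{\tau,Y}$. I would first use the usual sample-quantile identity $n^{-1}\sum_i\psi_\tau(Y_i-\widehat Q_{\tau,Y})=O_p(n^{-1})$ together with $\bar X=\mu_X+O_p(n^{-1/2})$ to reduce $\hat C_n$ to $n^{-1}\sum_i\psi_\tau(Y_i-\widehat Q_{\tau,Y})X_i$ up to $o_p(n^{-1/2})$. Then I would split
\[
\frac{1}{n}\sum_i\psi_\tau(Y_i-\widehat Q_{\tau,Y})X_i=\frac{1}{n}\sum_i\psi_\tau(Y_i-Q_{\tau,Y})X_i+\frac{1}{n}\sum_i\bigl[I(Y_i<Q_{\tau,Y})-I(Y_i<\widehat Q_{\tau,Y})\bigr]X_i,
\]
and handle the second (indicator) piece by a stochastic-equicontinuity argument: under the smoothness of $f_Y$ and the uniform integrability of $f_{Y|X}$ on a neighborhood of $Q_{\tau,Y}$, the empirical process $q\mapsto n^{-1/2}\sum_i\{I(Y_i<q)-I(Y_i<Q_{\tau,Y})-P(Y<q|X_i)+P(Y<Q_{\tau,Y}|X_i)\}X_i$ is $o_p(1)$ uniformly on $|q-Q_{\tau,Y}|\le\pi$, so one may replace the indicator increment by its conditional mean $f_{Y|X}(Q_{\tau,Y})(\widehat Q_{\tau,Y}-Q_{\tau,Y})$ with error $o_p(n^{-1/2})$.

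Next I would insert the Bahadur representation $\widehat Q_{\tau,Y}-Q_{\tau,Y}=\{nf_Y(Q_{\tau,Y})\}^{-1}\sum_j\psi_\tau(Y_j-Q_{\tau,Y})+o_p(n^{-1/2})$, which is available under the stated density conditions. After collecting terms, the definition of $\mu_{X|Y}=E[f_{Y|X}(Q_{\tau,Y})X]/f_Y(Q_{\tau,Y})$ yields the clean i.i.d.\ expansion
\[
\hat C_n-C=\frac{1}{n}\sum_{i=1}^n\bigl\{\psi_\tau(Y_i-Q_{\tau,Y})(X_i-\mu_{X|Y})-C\bigr\}+o_p(n^{-1/2}).
\]
Combined with the analogous expansion of $\hat\sigma_X^2-\sigma_X^2$, I would apply the Lindeberg--L\'evy central limit theorem to the influence functions
\[
\xi_i=\psi_\tau(Y_i-Q_{\tau,Y})(X_i-\mu_{X|Y})-C-\tfrac{C}{2\sigma_X^2}\bigl[(X_i-\mu_X)^2-\sigma_X^2\bigr],
\]
and identify the asymptotic variance with the displayed $\Omega_1$ by direct expansion of $\textrm{var}(\xi_i)$ into $\Sigma_{11},\Sigma_{12},\Sigma_{13}$. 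The chief difficulty throughout is justifying the stochastic-equicontinuity step and the Bahadur representation from the mild smoothness/integrability hypotheses on $f_Y$ and $f_{Y|X}$; once those are in place, the remainder is Slutsky and the delta method.
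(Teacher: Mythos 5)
Your proposal is correct and follows essentially the same route as the paper's proof: the same reduction of the numerator to $n^{-1}\sum_i\psi_\tau(Y_i-\widehat Q_{\tau,Y})X_i$, the same stochastic-equicontinuity treatment of the indicator increment, the Bahadur representation for $\widehat Q_{\tau,Y}$, and the joint CLT plus delta method yielding exactly the paper's influence function and $\Omega_1$. The one point to tighten is that the equicontinuity claim should be asserted uniformly over $O(n^{-1/2})$-neighborhoods of $Q_{\tau,Y}$ (as the paper does via $\xi_n(v)$ with $q=Q_{\tau,Y}+n^{-1/2}v$ and $|v|\le M$), not over the fixed neighborhood $|q-Q_{\tau,Y}|\le\pi$, where the centered process need not be $o_p(1)$.
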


To apply the above theorem, one needs to estimate the asymptotic variance $\Omega_1$.  To this end, we
employ a nonparametric approach, such as the Nadaraya-Watson regression, to estimate the function
$m(y)=E(X|Y=y)$, and denote it as $\widehat{m}(y)$. We further assume that the random vector $(X,Y)$ has a joint density, which leads to  $\mu_{X|Y}=E(X|Y=Q_{\tau,Y})$. Accordingly, we obtain the estimate,
$\widehat{\mu}_{X|Y}=\widehat{m}(\widehat{Q}_{\tau,Y})$, where $\widehat{Q}_{\tau,Y}$ is the
$\tau$th sample  quantile of $\{Y_1,...,Y_n\}$.
Finally, the rest of
quantities,   $\mu_X$, $\sigma_X^2$, $\text{qcov}_{\tau}\{Y,X\}$, $\Sigma_{11}$, $\Sigma_{12}$, and $\Sigma_{13}$ contained in $\Omega_1$ can be,  respectively, estimated by
$\widehat{\mu}_X=\bar{X}=n^{-1}\sum_{i=1}^nX_i$, $\widehat{\sigma}_X^2=n^{-1}\sum_{i=1}^n(X_i-\widehat{\mu}_X)^2$,
$\widehat{\textrm{qcov}}_{\tau}\{Y,X\} =n^{-1}\sum_{i=1}^n \psi_{\tau}(Y_i-\widehat{Q}_{\tau,Y})(X_i-\bar{X})$, $\widehat{\Sigma}_{11}=n^{-1}\sum_{i=1}^n(X_i-\widehat{\mu}_X)^4 -\widehat{\sigma}_X^4$, $\widehat{\Sigma}_{12}=n^{-1}\sum_{i=1}^n[\psi_{\tau}(Y_i-\widehat{Q}_{\tau,Y})(X_i-\widehat{\mu}_{X|Y})]^2 -[\widehat{\textrm{qcov}}_{\tau}\{Y,X\}]^2$, and $\widehat{\Sigma}_{13}=n^{-1}\sum_{i=1}^n\psi_{\tau}(Y_i-\widehat{Q}_{\tau,Y})(X_i-\widehat{\mu}_{X|Y})(X_i-\widehat{\mu}_X)^2 -\widehat{\sigma}_X^2\widehat{\textrm{qcov}}_{\tau}\{Y,X\}$. As a result, we obtain the estimate of
$\Omega_1$, and denote it by $\widehat{\Omega}_1$.

We next estimate the quantile partial correlation $\textrm{qpcor}_{\tau}\{Y,X\}$.
Let
\[
(\widehat{\alpha}_1,\widehat{\beta}_1^{\prime})=\argmin_{\alpha,\beta}\sum_{i=1}^n(X_i-\alpha-\beta^{\prime}\mathbf{Z}_i)^2
\hspace{5mm}
\text{and}
\hspace{5mm}
(\widehat{\alpha}_2,\widehat{\beta}_2^{\prime}) =\argmin_{\alpha,\beta}\sum_{i=1}^n\rho_{\tau}(Y_i-\alpha-\beta^{\prime}\mathbf{Z}_i).
\]
Based on equation (\ref{eqd}), the sample quantile partial correlation is defined as
\begin{equation}\label{seqd}
\widehat{\textrm{qpcor}}_{\tau}\{Y,X|\mathbf{Z}\}=\frac{1}{\sqrt{(\tau-\tau^2)\widehat{\sigma}_{X|\mathbf{Z}}^2}} \cdot \frac{1}{n}\sum_{i=1}^n \psi_{\tau}(Y_i-\widehat{\alpha}_2-\widehat{\beta}_2^{\prime}\mathbf{Z}_i) X_i,
\end{equation}
where $\widehat{\sigma}_{X|\mathbf{Z}}^2=n^{-1}\sum_{i=1}^n(X_i-\widehat{\alpha}_1-\widehat{\beta}_1^{\prime}\mathbf{Z}_i)^2$.

To investigate the asymptotic property of $\widehat{\textrm{qpcor}}_{\tau}\{Y,X|\mathbf{Z}\}$,
denote the conditional density of $Y$ given $\mathbf{Z}$ and  the conditional density of
$Y$ given $\mathbf{Z}$ and $X$  by $f_{Y|\mathbf{Z}}(\cdot)$  and $f_{Y|\mathbf{Z},X}(\cdot)$, respectively. In addition,
let $\theta_1=(\alpha_1,\beta_1^{\prime})^{\prime}$, $\theta_2=(\alpha_2,\beta_2^{\prime})^{\prime}$, $\mathbf{Z}^*=(1,\mathbf{Z}^{\prime})^{\prime}$,
$\Sigma_{21}=E[f_{Y|\mathbf{Z},X}(\theta_2^{\prime}\mathbf{Z}^*)X\mathbf{Z}^*]$,
$\Sigma_{22}=E[f_{Y|\mathbf{Z}}(\theta_2^{\prime}\mathbf{Z}^*)\mathbf{Z}^*\mathbf{Z}^{*\prime}]$,
$\Sigma_{20}=\Sigma_{21}^{\prime} \Sigma_{22}^{-1}$,
$\Sigma_{23}=E(X-\theta_1^{\prime}\mathbf{Z}^*)^4-\sigma_{X|\mathbf{Z}}^4$,
\[
\Sigma_{24}=E[\psi_{\tau}(Y-\theta_2\mathbf{Z}^*)(X -\Sigma_{20}\mathbf{Z}^*)]^2-\{E[\psi_{\tau}(Y-\theta_2^{\prime}\mathbf{Z}^*)X]\}^2,
\]
\[
\Sigma_{25}=E[\psi_{\tau}(Y-\theta_2\mathbf{Z}^*)(X -\Sigma_{20}\mathbf{Z}^*)(X-\theta_1^{\prime}\mathbf{Z}^*)^2] -\sigma_{X|\mathbf{Z}}^2\cdot E[\psi_{\tau}(Y-\theta_2^{\prime}\mathbf{Z}^*)X],
\]
and
\[
\Omega_2=\frac{1}{\tau-\tau^2}\left[
\frac{\Sigma_{23}(E[\psi_{\tau}(Y-\theta_2^{\prime}\mathbf{Z}^*)X])^2}{4\sigma_{X|\mathbf{Z}}^6} -\frac{\Sigma_{25}\cdot E[\psi_{\tau}(Y-\theta_2^{\prime}\mathbf{Z}^*)X]}{\sigma_{X|\mathbf{Z}}^4} +\frac{\Sigma_{24}}{\sigma_{X|\mathbf{Z}}^2}
\right],
\]
where $\alpha_1$, $\beta_1$, $\alpha_2$, $\beta_2$ and $\sigma_{X|\mathbf{Z}}^2$ are defined as in the previous subsection. Then, we have the following result.

\begin{thm}\label{thm2}
Suppose that $\Sigma_{21}<\infty$, $0<\Sigma_{22}<\infty$, $EX^4<\infty$, $E\|\mathbf{Z}\|^4<\infty$, $E(\mathbf{Z}^*\mathbf{Z}^{*\prime})>0$, and there exists a $\pi>0$ such that $f_{Y|\mathbf{Z}}(\theta_2^{\prime}\mathbf{Z}^*+\cdot)$ and $f_{Y|\mathbf{Z},X}(\theta_2^{\prime}\mathbf{Z}^*+\cdot)$ are uniformly integrable on $[-\pi,\pi]$.
Then
\[
\sqrt{n}[\widehat{\textrm{qpcor}}_{\tau}\{Y,X|\mathbf{Z}\} -\textrm{qpcor}_{\tau}\{Y,X|\mathbf{Z}\}]\rightarrow_dN(0,\Omega_2).
\]
\end{thm}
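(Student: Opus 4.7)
\textbf{Proof proposal for Theorem \ref{thm2}.} The plan is a delta-method argument applied to the ratio $\widehat{\textrm{qpcor}}_\tau\{Y,X|\mathbf{Z}\} = A_n/(c\,B_n^{1/2})$, with $c=\sqrt{\tau-\tau^2}$, $A_n = n^{-1}\sum_{i=1}^n \psi_\tau(Y_i - \widehat{\theta}_2^{\prime}\mathbf{Z}_i^*)X_i$, $B_n = \widehat{\sigma}_{X|\mathbf{Z}}^2$, and population analogues $A = E[\psi_\tau(Y - \theta_2^{\prime}\mathbf{Z}^*)X]$ and $B = \sigma_{X|\mathbf{Z}}^2$. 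A Taylor expansion of the ratio gives
\[
\sqrt{n}\bigl(\widehat{\textrm{qpcor}}_\tau\{Y,X|\mathbf{Z}\} - \textrm{qpcor}_\tau\{Y,X|\mathbf{Z}\}\bigr) = \frac{1}{c\sqrt{B}}\sqrt{n}(A_n - A) - \frac{A}{2cB^{3/2}}\sqrt{n}(B_n - B) + o_p(1),
\]
so the task reduces to a joint i.i.d.\ linear representation of $\sqrt{n}(A_n - A)$ and $\sqrt{n}(B_n - B)$ followed by the multivariate central limit theorem.

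Write $g_n(\theta) = n^{-1}\sum_i \psi_\tau(Y_i - \theta^{\prime}\mathbf{Z}_i^*)X_i$ and $g(\theta) = E[\psi_\tau(Y - \theta^{\prime}\mathbf{Z}^*)X]$, and decompose
\[
A_n - A = \bigl\{g_n(\widehat{\theta}_2) - g_n(\theta_2) - g(\widehat{\theta}_2) + g(\theta_2)\bigr\} + \bigl\{g(\widehat{\theta}_2) - g(\theta_2)\bigr\} + \bigl\{g_n(\theta_2) - g(\theta_2)\bigr\}.
\]
The first bracket is a stochastic-equicontinuity term, to be shown $o_p(n^{-1/2})$ via an empirical-process bound over an $n^{-1/2}$-neighbourhood of $\theta_2$; the uniform integrability of $f_{Y|\mathbf{Z},X}(\theta_2^{\prime}\mathbf{Z}^* + \cdot)$ together with $EX^4<\infty$ and $E\|\mathbf{Z}\|^4<\infty$ is tailored to this step. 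The second bracket is handled by smoothness of $g$: differentiation under the integral (justified by the same uniform integrability) gives $\nabla g(\theta_2) = -\Sigma_{21}$, so $g(\widehat{\theta}_2) - g(\theta_2) = -\Sigma_{21}^{\prime}(\widehat{\theta}_2 - \theta_2) + o_p(n^{-1/2})$. The standard Bahadur representation for the quantile regression estimator \citep[Chapter~4]{Koenker2005}, available under $0<\Sigma_{22}<\infty$ and the uniform integrability of $f_{Y|\mathbf{Z}}(\theta_2^{\prime}\mathbf{Z}^*+\cdot)$, then supplies
\[
\sqrt{n}(\widehat{\theta}_2 - \theta_2) = \Sigma_{22}^{-1}\,\frac{1}{\sqrt{n}}\sum_{i=1}^n \psi_\tau(Y_i - \theta_2^{\prime}\mathbf{Z}_i^*)\mathbf{Z}_i^* + o_p(1).
\]
Combining the three pieces and using the first-order condition $E[\psi_\tau(Y - \theta_2^{\prime}\mathbf{Z}^*)\mathbf{Z}^*]=0$ to recentre yields
\[
\sqrt{n}(A_n - A) = \frac{1}{\sqrt{n}}\sum_{i=1}^n \bigl\{\psi_\tau(Y_i - \theta_2^{\prime}\mathbf{Z}_i^*)[X_i - \Sigma_{20}\mathbf{Z}_i^*] - A\bigr\} + o_p(1),
\]
in which $\Sigma_{20}\mathbf{Z}^*$ plays the role of the $f_{Y|\mathbf{Z}}$-weighted linear projection of $X$ onto $\mathbf{Z}^*$.

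The expansion of $B_n$ is routine: ordinary least squares yields $\widehat{\theta}_1 - \theta_1 = O_p(n^{-1/2})$ under the stated fourth-moment conditions and $E(\mathbf{Z}^*\mathbf{Z}^{*\prime})>0$; substituting into $B_n = n^{-1}\sum_i(X_i - \widehat{\theta}_1^{\prime}\mathbf{Z}_i^*)^2$ and using the population normal equation $E[(X - \theta_1^{\prime}\mathbf{Z}^*)\mathbf{Z}^*]=0$ to absorb the cross term produces
\[
\sqrt{n}(B_n - B) = \frac{1}{\sqrt{n}}\sum_{i=1}^n\bigl\{(X_i - \theta_1^{\prime}\mathbf{Z}_i^*)^2 - B\bigr\} + o_p(1).
\]

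It remains to apply the multivariate central limit theorem to the joint i.i.d.\ sum for $(A_n - A, B_n - B)$: its limiting covariance matrix has diagonal entries $\Sigma_{24}$ and $\Sigma_{23}$ and off-diagonal entry $\Sigma_{25}$, matching exactly the quantities defined in the statement. Assembling these through the delta-method weights $1/(c\sqrt{B})$ and $-A/(2cB^{3/2})$ reproduces $\Omega_2$ after algebraic simplification. The principal obstacle is the stochastic-equicontinuity step for $A_n$: the step-function structure of $\psi_\tau$ makes the empirical process over a neighbourhood of $\theta_2$ non-smooth, so one must upgrade the one-dimensional argument from the proof of Theorem~\ref{thm1} to a full $(q+1)$-dimensional parameter, which is precisely where the uniform-integrability hypotheses on $f_{Y|\mathbf{Z},X}$ and $f_{Y|\mathbf{Z}}$ become essential.
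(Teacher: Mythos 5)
Your proposal is correct and follows essentially the same route as the paper's proof: the paper also splits the numerator into the centred sum at $\theta_2$ plus an increment $g_\tau(Y_i,\mathbf{Z}_i,\theta_2,\widehat{\theta}_2)X_i$, controls the latter by a stochastic-equicontinuity bound on a centred empirical process $\xi_n(\mathbf{v})$ over $n^{-1/2}$-neighbourhoods (using exactly the uniform integrability of $f_{Y|\mathbf{Z},X}$ and the fourth-moment conditions, as you anticipated), substitutes the Bahadur representation of $\widehat{\theta}_2$ to obtain the drift $-\Sigma_{21}^{\prime}\Sigma_{22}^{-1}\cdot n^{-1/2}\sum\psi_\tau(Y_i^*)\mathbf{Z}_i^*$, and finishes with the joint CLT via Cram\'er--Wold and the delta method with the same covariance matrix $\bigl(\begin{smallmatrix}\Sigma_{23}&\Sigma_{25}\\ \Sigma_{25}&\Sigma_{24}\end{smallmatrix}\bigr)$. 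The only cosmetic difference is that the paper centres the increment by its conditional expectation given $(X_i,\mathbf{Z}_i)$ rather than by the population map $g(\theta)$, which amounts to the same linearization.
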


To estimate the asymptotic variance $\Omega_2$ given in  Theorem 2, we consider
$Y^*=Y-\theta_2^{\prime}\mathbf{Z}^*$ and  $\textrm{qcov}_{\tau}\{Y^*,X\}=E[\psi_{\tau}(Y-\theta_2^{\prime}\mathbf{Z}^*)X]$. In addition,
assume that the random vector $(Y,X,\mathbf{Z}^{\prime})^{\prime}$ has a joint density.
We then have that $\Sigma_{21}= E[f_{Y^*|\mathbf{Z},X}(0)X\mathbf{Z}^*]=f_{Y^*}(0)\cdot E[X\mathbf{Z}^*|Y^*=0]$, $\Sigma_{22}=f_{Y^*}(0)\cdot E[\mathbf{Z}^*\mathbf{Z}^{*\prime}|Y^*=0]$, and $\Sigma_{20}=E[X\mathbf{Z}^{*\prime}|Y^*=0]\{E[\mathbf{Z}^*\mathbf{Z}^{*\prime}|Y^*=0]\}^{-1}$, where $f_{Y^*}(\cdot)$ is the density of $Y^*$.
Applying the same nonparametric technique as that used for estimating $\mu_{X|Y}$ in Theorem 1,
we could estimate each of the vector and matrix components in $\mathbf{m}_1(y)=E[X\mathbf{Z}^*|Y^*=y]$
and $\mathbf{m}_2(y)=E[\mathbf{Z}^*\mathbf{Z}^{*\prime}|Y^*=y]$, respectively, from the data
$\{(Y_i^*,X_i,\mathbf{Z}_i^{\prime})=(Y_i-\widehat{\theta}_2^{\prime}\mathbf{Z}^*_i,X_i,\mathbf{Z}_i^{\prime}),i=1,...,n\}$, where $\widehat{\theta}_2=(\widehat\alpha_2,\widehat\beta_2^{\prime})^{\prime}$.
Accordingly, $\widehat{\Sigma}_{20}=\widehat\Sigma_{21}^{\prime} \widehat\Sigma_{22}^{-1}=\widehat{\mathbf{m}}_1^{\prime}(0)[\widehat{\mathbf{m}}_2(0)]^{-1}$.
Subsequently, the
rest of quantities involved in  $\Omega_2$,  $\sigma_{X|\mathbf{Z}}^2$, $\textrm{qcov}_{\tau}\{Y^*,X\}$, $\Sigma_{23}$, $\Sigma_{24}$, and $\Sigma_{25}$ can be, respectively, estimated by
$\widehat{\sigma}_{X|\mathbf{Z}}^2=n^{-1}\sum_{i=1}^n(X_i-\widehat{\alpha}_1-\widehat{\beta}_1^{\prime}\mathbf{Z}_i)^2$, $\widehat{\textrm{qcov}}_{\tau}\{Y^*,X\}=n^{-1}\sum_{i=1}^n\psi_{\tau}(Y_i-\widehat{\theta}_2^{\prime}\mathbf{Z}^*_i)X_i$, $\widehat{\Sigma}_{23}=n^{-1}\sum_{i=1}^n(X_i-\widehat{\theta}_1^{\prime}\mathbf{Z}^*_i)^4-\widehat{\sigma}_{X|\mathbf{Z}}^4$, $\widehat{\Sigma}_{24}=n^{-1}\sum_{i=1}^n[\psi_{\tau}(Y_i-\widehat{\theta}_2\mathbf{Z}^*_i)(X_i -\widehat{\Sigma}_{20}\mathbf{Z}^*_i)]^2-[\widehat{\textrm{qcov}}_{\tau}\{Y^*,X\}]^2$, and $\widehat{\Sigma}_{25}=n^{-1}\sum_{i=1}^n\psi_{\tau}(Y_i-\widehat{\theta}_2\mathbf{Z}^*_i)(X_i -\widehat{\Sigma}_{20}\mathbf{Z}^*_i)(X_i-\widehat{\theta}_1^{\prime}\mathbf{Z}^*_i)^2 -\widehat{\sigma}_{X|\mathbf{Z}}^2\cdot\widehat{\textrm{qcov}}_{\tau}\{Y^*,X\}$. Consequently, we obtain the estimate of
$\Omega_2$, and denote it by $\widehat{\Omega}_2$.

It is noteworthy that the quantile correlation and  quantile partial correlation
can be broadly used as the classical correlation and  partial correlation in regression analysis (e.g., variable selections), although our focus is on quantile autoregressive models.

\section{Quantile autoregressive modeling}

Suppose that $\{y_t\}$ is a strictly stationary and ergodic time series, and $\mathcal{F}_t$ is the $\sigma$-field generated by $\{y_t,y_{t-1},...\}$. We then follow
\citeauthor{Koenker_Xiao2006}'s (2006) approach and present QAR models; i.e.,
conditional on $\mathcal{F}_{t-1}$, the $\tau$th quantile of $y_t$  has the form of
\begin{equation}\label{section3_eq1}
Q_{\tau}(y_t|\mathcal{F}_{t-1})=\phi_0(\tau)+\phi_1(\tau)y_{t-1}+\cdots +\phi_p(\tau)y_{t-p}\hspace{2mm} \text{for}\hspace{2mm} 0<\tau<1,
\end{equation}
where  $\phi_i(\cdot)$s are unknown functions mapping from $[0,1]\rightarrow R$.
Following the Box-Jenkins' classical approach, we next introduce the QPACF of a time series to identify the order of a QAR model, and then propose using the QACF of residuals to assess the adequacy of the fitted model.

\subsection{Model identification and estimation}

For the positive integer $k$, let $\mathbf{z}_{t,k-1}=(y_{t-1},...,y_{t-k+1})^{\prime}$,
$(\alpha_1,\beta_1^{\prime}) =\argmin_{\alpha,\beta} E(y_{t-k}-\alpha-\beta^{\prime}\mathbf{z}_{t,k-1})^2$,
and $(\alpha_2,\beta_2^{\prime}) =\argmin_{\alpha,\beta}E[\rho_{\tau}(y_t-\alpha-\beta^{\prime}\mathbf{z}_{t,k-1})]$,
where  the notations $(\alpha_1,\beta_1^{\prime})$ and $(\alpha_2,\beta_2^{\prime})$ are a slight abuse since they have been used to denote the regression parameters in Section 2.
From  equation (\ref{eqd}), we obtain
 the quantile partial correlation between $y_t$ and $y_{t-k}$ after adjusting the
linear effect  $\mathbf{z}_{t,k-1}$,
\begin{equation*}
\phi_{kk,\tau}=\textrm{qpcor}_{\tau}\{y_t,y_{t-k}|\mathbf{z}_{t,k-1}\} =\frac{E[\psi_{\tau}(y_t-\alpha_2-\beta_2^{\prime}\mathbf{z}_{t,k-1})y_{t-k}]}{\sqrt{(\tau-\tau^2)E(y_{t-k}-\alpha_1-\beta_1^{\prime}\mathbf{z}_{t,k-1})^2}},
\end{equation*}
and it is independent of the time index $t$ due to the strict stationarity of $\{y_t\}$. Analogous to the definition of the classical PACF \cite[Chapter 2]{Fan_Yao2003}, we name $\phi_{kk,\tau}$ to be the QPACF of time series $\{y_t\}$. It is also noteworthy that $\phi_{11,\tau}=\textrm{qcor}_{\tau}\{y_t,y_{t-1}\}$.
We next show the cut-off property of QPACF.

\begin{lemma}\label{lem3}
If $\phi_p(\tau)\neq 0$ with $p>0$, $Ey_t^2<\infty$ and $E[y_t-E(y_t|\mathcal{F}_{t-1})]^2>0$, then $\phi_{pp,\tau}\neq 0$, and $\phi_{kk,\tau}=0$ for $k>p$.
\end{lemma}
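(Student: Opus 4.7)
The plan is to split the argument into the two claims and exploit in each case that, under the QAR model, the true conditional quantile $Q_\tau(y_t \mid \mathcal{F}_{t-1})$ is a linear function of the first $p$ lags. Since $\sigma(\mathbf{z}_{t,k-1}) \subseteq \mathcal{F}_{t-1}$, the best $\tau$-quantile linear predictor from $\mathbf{z}_{t,k-1}$ is bracketed between the best unrestricted $\mathcal{F}_{t-1}$-measurable predictor (which attains the smallest possible expected check loss) and any candidate linear predictor; when a specific linear predictor already equals $Q_\tau(y_t \mid \mathcal{F}_{t-1})$, it must be an argmin. I would justify uniqueness of the argmin by the same kind of argument used in Lemma \ref{lem1}, invoking the joint density hypothesis inherited from the QAR setup plus positive-definiteness of the lag covariance matrix; the latter follows from stationarity together with the innovation condition $E[y_t - E(y_t\mid\mathcal{F}_{t-1})]^2>0$, which also guarantees the denominator $E(y_{t-k}-\alpha_1-\beta_1'\mathbf{z}_{t,k-1})^2>0$ in the definition of $\phi_{kk,\tau}$.

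For $k>p$, take $\alpha_2=\phi_0(\tau)$ and $\beta_2=(\phi_1(\tau),\ldots,\phi_p(\tau),0,\ldots,0)^\prime$. Then $\alpha_2+\beta_2'\mathbf{z}_{t,k-1}=Q_\tau(y_t\mid\mathcal{F}_{t-1})$, so this linear predictor attains the global minimum of $E[\rho_\tau(y_t-f(\mathbf{z}_{t,k-1}))]$ over all measurable $f$, and therefore also over linear $f$. Writing $\psi_\tau(y_t-\alpha_2-\beta_2'\mathbf{z}_{t,k-1})=\psi_\tau(y_t-Q_\tau(y_t\mid\mathcal{F}_{t-1}))$ and conditioning on $\mathcal{F}_{t-1}$ gives $E[\psi_\tau(y_t-Q_\tau(y_t\mid\mathcal{F}_{t-1}))\mid\mathcal{F}_{t-1}]=\tau - P(y_t<Q_\tau(y_t\mid\mathcal{F}_{t-1})\mid\mathcal{F}_{t-1})=0$. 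Since $y_{t-k}$ is $\mathcal{F}_{t-1}$-measurable, the numerator of $\phi_{kk,\tau}$ equals $E[y_{t-k}\cdot E[\psi_\tau(y_t-Q_\tau(y_t\mid\mathcal{F}_{t-1}))\mid\mathcal{F}_{t-1}]]=0$, giving $\phi_{kk,\tau}=0$.

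For $k=p$, apply Lemma \ref{lem2} with $(Y,X,\mathbf{Z})=(y_t,y_{t-p},\mathbf{z}_{t,p-1})$. By exactly the same logic as above, the augmented best linear quantile fit $(\alpha_3,\beta_3^\prime,\gamma_3)$ must equal $(\phi_0(\tau),\phi_1(\tau),\ldots,\phi_{p-1}(\tau),\phi_p(\tau))$, because this choice reproduces $Q_\tau(y_t\mid\mathcal{F}_{t-1})$ and is therefore an argmin (and unique by the joint-density argument). Since $\gamma_3=\phi_p(\tau)\neq 0$ by hypothesis, the augmented fit differs from $(\alpha_2,\beta_2^\prime,0)$, and Lemma \ref{lem2} yields $\phi_{pp,\tau}=\textrm{qpcor}_\tau\{y_t,y_{t-p}\mid\mathbf{z}_{t,p-1}\}\neq 0$.

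The main obstacle, as flagged above, is the uniqueness of the linear quantile argmin (so that the candidate coefficient vector can be identified from the QAR representation) together with verifying that $\textrm{Var}(y_{t-p}-\alpha_1-\beta_1'\mathbf{z}_{t,p-1})>0$, so that $\phi_{pp,\tau}$ is well defined and the hypotheses of Lemma \ref{lem2} are met. Both are standard consequences of the stationarity and innovation assumptions, but they do require a short verification step before the chain of equalities above can be invoked cleanly.
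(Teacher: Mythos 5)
Your proposal is correct and follows essentially the same route as the paper: for $k>p$ you identify $(\alpha_2,\beta_2')$ with $(\phi_0(\tau),\ldots,\phi_p(\tau),0,\ldots,0)$ and kill the numerator by conditioning on $\mathcal{F}_{t-1}$, and for $k=p$ you invoke Lemma \ref{lem2} with $\gamma_3=\phi_p(\tau)\neq 0$. The extra care you take in justifying that the QAR coefficients are the (unique) linear argmin via the bracketing against the unrestricted conditional quantile is a point the paper simply asserts, so your write-up is, if anything, slightly more complete.
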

\noindent
The above lemma indicates that the proposed QPACF plays the same role as that of PACF in the classical AR model identification.

In practice, one needs the sample estimate of QPACF. To this end,
let
\[
(\widetilde{\alpha}_1,\widetilde{\beta}_1^{\prime}) =\argmin_{\alpha,\beta}\sum_{t=k+1}^n(y_{t-k}-\alpha-\beta^{\prime}\mathbf{z}_{t,k-1})^2,\hspace{5mm}
(\widetilde{\alpha}_2,\widetilde{\beta}_2^{\prime}) =\argmin_{\alpha,\beta}\sum_{t=k+1}^n\rho_{\tau}(y_t-\alpha-\beta^{\prime}\mathbf{z}_{t,k-1}),
\]
and $\widetilde{\sigma}_{y|\mathbf{z}}^2=n^{-1} \sum_{t=k+1}^n(y_{t-k}-\widetilde{\alpha}_1-\widetilde{\beta}_1^{\prime}\mathbf{z}_{t,k-1})^2$.
According to (\ref{seqd}), we obtain the estimation for $\phi_{kk,\tau}$,
\[
\widetilde{\phi}_{kk,\tau}=\frac{1}{\sqrt{(\tau-\tau^2)\widetilde{\sigma}_{y|\mathbf{z}}^2}}\cdot \frac{1}{n}\sum_{t=k+1}^n \psi_{\tau}(y_t-\widetilde{\alpha}_2-\widetilde{\beta}_2^{\prime}\mathbf{z}_{t,k-1}) y_{t-k},
\]
and we term it the sample QPACF of the time series.

To study the asymptotic property of $\widetilde{\phi}_{kk,\tau}$, we
introduce the following assumption, which is similar to Condition A.3 in \cite{Koenker_Xiao2006}.
\begin{assum}\label{assum3}
$Ey_t^2<\infty$, $E[y_t-E(y_t|\mathcal{F}_{t-1})]^2>0$, and there exists a $\pi>0$ such that $f_{t-1}(\cdot)$ is uniformly integrable on $[-\pi,\pi]$.
\end{assum}
\noindent
Furthermore,
let
\begin{equation}\label{eqb}
e_{t,\tau}=y_t-\phi_0(\tau)-\phi_1(\tau)y_{t-1}-\cdots -\phi_p(\tau)y_{t-p}.
\end{equation}
By \eqref{section3_eq1}, the random variable $I(e_{t,\tau}>0)$ is independent of $y_{t-k}$ for any $k>0$, and
$(\alpha_2,\beta_2^{\prime})=(\phi_0(\tau),\phi_1(\tau),...,\phi_p(\tau),0,...,0) $
for $k> p$.
Let $f_{t-1}(\cdot)$ be the conditional density of $e_{t,\tau}$ on the $\sigma$-field $\mathcal{F}_{t-1}$, and $\mathbf{z}^*_{t,k-1}=(1,\mathbf{z}_{t,k-1}^{\prime})^{\prime}=(1,y_{t-1},...,y_{t-k+1})^{\prime}$.
Moreover, let $A_0=E[y_{t-k}\mathbf{z}^*_{t,k-1}]$, $A_1=E[f_{t-1}(0)y_{t-k}\mathbf{z}^*_{t,k-1}]$, $\Sigma_{30}=E[\mathbf{z}^*_{t,k-1}\mathbf{z}_{t,k-1}^{*\prime}]$,  $\Sigma_{31}=E[f_{t-1}(0)\mathbf{z}^*_{t,k-1}\mathbf{z}_{t,k-1}^{*\prime}]$, and
\[
\Omega_3=\frac{E(y_t^2)-2A_1^{\prime}\Sigma_{31}^{-1}A_0 +A_1^{\prime}\Sigma_{31}^{-1}\Sigma_{30}\Sigma_{31}^{-1}A_1}{E(y_{t-k}-\alpha_1-\beta_1^{\prime}\mathbf{z}_{t,k-1})^2}.
\]
Then, we obtain the asymptotic result given below.
\begin{thm}\label{thm3}
For $k>p$, if $A_1<\infty$, $0<\Sigma_{31}<\infty$ and Assumption \ref{assum3} is satisfied, then ${\phi}_{kk,\tau}=0$ and
\[
\sqrt{n}\widetilde{\phi}_{kk,\tau}\rightarrow_dN(0,\Omega_3).
\]
\end{thm}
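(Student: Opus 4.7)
The plan is to exploit the fact that under the null (i.e., $k>p$), $\phi_{kk,\tau}=0$ follows immediately from Lemma \ref{lem3}, so the real work is to derive the asymptotic distribution of $\sqrt{n}\,\widetilde{\phi}_{kk,\tau}$ through a linearization argument combined with a martingale central limit theorem. Since the denominator $\widetilde\sigma_{y|\mathbf{z}}^2$ is a sample moment of $\mathcal{F}_{t-1}$-measurable quantities, ergodicity plus $Ey_t^2<\infty$ give $\widetilde\sigma_{y|\mathbf{z}}^2\to_p E(y_{t-k}-\alpha_1-\beta_1'\mathbf{z}_{t,k-1})^2$, so by Slutsky it suffices to analyze the numerator
\[
S_n=\frac{1}{\sqrt n}\sum_{t=k+1}^{n}\psi_\tau\bigl(y_t-\widetilde\theta_2'\mathbf{z}^*_{t,k-1}\bigr)\,y_{t-k},\qquad \widetilde\theta_2=(\widetilde\alpha_2,\widetilde\beta_2')'.
\]

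Next, under the model and $k>p$, equation (\ref{eqb}) gives $\theta_2=(\phi_0(\tau),\phi_1(\tau),\ldots,\phi_p(\tau),0,\ldots,0)'$ and $\{\psi_\tau(e_{t,\tau})\}$ is a martingale difference sequence relative to $\{\mathcal F_t\}$ with conditional variance $\tau-\tau^2$. By a standard Bahadur representation for the quantile regression estimator (this is the argument used in \cite{Koenker_Xiao2006} under their Condition A.3, which is mirrored in Assumption \ref{assum3}), one obtains
\[
\sqrt n\,(\widetilde\theta_2-\theta_2)=\Sigma_{31}^{-1}\,\frac{1}{\sqrt n}\sum_{t=k+1}^{n}\psi_\tau(e_{t,\tau})\,\mathbf{z}^*_{t,k-1}+o_p(1).
\]
The crucial step is then a stochastic equicontinuity argument for the perturbed empirical process
\[
R_n(u)=\frac{1}{\sqrt n}\sum_{t=k+1}^n\Bigl\{\psi_\tau\bigl(e_{t,\tau}-u'\mathbf{z}^*_{t,k-1}/\sqrt n\bigr)-\psi_\tau(e_{t,\tau})\Bigr\}y_{t-k}
\]
evaluated at $u=\sqrt n(\widetilde\theta_2-\theta_2)=O_p(1)$. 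Using the uniform integrability of $f_{t-1}(\cdot)$ on $[-\pi,\pi]$, $Ey_t^2<\infty$, and the finite second moments of $\mathbf{z}^*_{t,k-1}$, a chaining or bracketing argument of the type developed by Koenker and Xiao shows $R_n(u)=-u'A_1+o_p(1)$ uniformly on compacta, where $A_1=E[f_{t-1}(0)y_{t-k}\mathbf{z}^*_{t,k-1}]$. Substituting the Bahadur representation yields
\[
S_n=\frac{1}{\sqrt n}\sum_{t=k+1}^{n}\psi_\tau(e_{t,\tau})\bigl[y_{t-k}-A_1'\Sigma_{31}^{-1}\mathbf{z}^*_{t,k-1}\bigr]+o_p(1).
\]

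The remaining step is clean: the summands are a stationary martingale difference sequence with $\mathcal F_{t-1}$-measurable weights, so the Billingsley/Hall--Heyde martingale CLT applies. The conditional variance per term is $(\tau-\tau^2)\,E[(y_{t-k}-A_1'\Sigma_{31}^{-1}\mathbf{z}^*_{t,k-1})^2]$; expanding this square and using the definitions of $A_0$ and $\Sigma_{30}$ produces the factor $E(y_t^2)-2A_1'\Sigma_{31}^{-1}A_0+A_1'\Sigma_{31}^{-1}\Sigma_{30}\Sigma_{31}^{-1}A_1$ appearing in the numerator of $\Omega_3$. Combining with the Slutsky step for $\widetilde\sigma_{y|\mathbf z}^2$ and the prefactor $1/\sqrt{\tau-\tau^2}$ gives $\sqrt n\,\widetilde\phi_{kk,\tau}\to_d N(0,\Omega_3)$.

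The principal obstacle is the stochastic equicontinuity bound on $R_n(u)$ in the time-series setting: one cannot appeal directly to independent-data $U$-process tools, and the conditional density $f_{t-1}(\cdot)$ is random and only controlled through the uniform integrability portion of Assumption \ref{assum3}. I would handle this by mimicking Koenker and Xiao's treatment of their Condition A.3, splitting $R_n(u)$ into its compensator $-u'A_1$ (handled by a first-order Taylor expansion of $E[\psi_\tau(e_{t,\tau}-u'\mathbf{z}^*_{t,k-1}/\sqrt n)\mid\mathcal F_{t-1}]$ combined with dominated convergence) and a mean-zero remainder whose $L^2$ norm is bounded via the $\mathcal F_{t-1}$-conditional second-moment computation together with $Ey_t^2<\infty$ and $E\|\mathbf z^*_{t,k-1}\|^2<\infty$; the uniformity in $u$ over compact sets then comes from a convexity-based monotonicity argument as used for the QR score process.
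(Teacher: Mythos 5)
Your proposal is correct and follows essentially the same route as the paper: Lemma \ref{lem3} for $\phi_{kk,\tau}=0$, an ergodic-theorem/Slutsky step for $\widetilde{\sigma}_{y|\mathbf{z}}^2$, the Bahadur representation $\sqrt{n}(\widetilde{\theta}_2-\theta_2)=\Sigma_{31}^{-1}n^{-1/2}\sum\psi_{\tau}(e_{t,\tau})\mathbf{z}^*_{t,k-1}+o_p(1)$ (the paper imports this from the proof of Theorem \ref{thm4}), a stochastic-equicontinuity linearization of the perturbed score giving the correction term $-A_1^{\prime}\Sigma_{31}^{-1}\cdot n^{-1}\sum\psi_{\tau}(e_{t,\tau})\mathbf{z}^*_{t,k-1}$ (the paper's analogue of its $\xi_n$ argument in Theorem \ref{thm1}), and finally the martingale-difference CLT with the same variance expansion yielding $\Omega_3$. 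No substantive differences.
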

\noindent
To estimate  $\Omega_3$ in the above theorem, we first apply the
\cite{Hendricks_Koenker1991} method to obtain the estimation of $f_{t-1}(0)$ given below.
\[
\widetilde{f}_{t-1}(0) =\frac{2h}{\widetilde{Q}_{\tau+h}(y_t|\mathcal{F}_{t-1})-\widetilde{Q}_{\tau-h}(y_t|\mathcal{F}_{t-1})},
\]
where $\widetilde{Q}_{\tau}(y_t|\mathcal{F}_{t-1})=\widetilde\phi_0(\tau)+\widetilde\phi_1(\tau)y_{t-1}+\cdots +\widetilde\phi_k(\tau)y_{t-k}$ is the estimated $\tau$th quantile of $y_t$ and
$h$ is the bandwidth selected via appropriate methods (e.g., see \citealp{Koenker_Xiao2006}).
Afterwards, we can use the sample averaging to approximate $A_0$, $A_1$, $\Sigma_{30}$, $\Sigma_{31}$, $E(y_t^2)$, and $E(y_{t-k}-\alpha_1-\beta_1^{\prime}\mathbf{z}_{t,k-1})^2$ by replacing their $f_{t-1}(\cdot)$, $\alpha_1$, and $\beta_1$, respectively, with $\widetilde{f}_{t-1}(0)$, $\widetilde{\alpha}_1$ and $\widetilde{\beta}_1$. Accordingly, we obtain an estimate of $\Omega_3$, and denote it as $\widehat\Omega_3$.
In sum, we are able to use the threshold values $\pm 1.96\sqrt{\widehat\Omega_3/n}$ to check the significance of $\widetilde{\phi}_{kk,\tau}$.

To demonstrate how to use the above theorem to identify the order of a QAR model,  we generate the observations $y_1,...,y_{200}$ from
$
y_t=\Phi^{-1}(u_t)+a(u_t)y_{t-1},
$
where $\Phi$ is the standard normal cumulative distribution function, $a(x)=\max\{0.8-1.6x,0\}$, and $\{u_t\}$ is an $i.i.d$ sequence with uniform distribution on $[0,1]$. We attempt to fit
the QAR model \eqref{section3_eq1} with $\tau=0.2$, 0.4, 0.6,  and 0.8, respectively, to the observed data $\{y_t\}$.
Figure \ref{fig1} presents the sample QPACF $\widetilde{\phi}_{kk,\tau}$ for each $\tau$ with
the reference lines
$\pm 1.96\sqrt{\widehat\Omega_3/n}$.
We may conclude that the order $p$ is 1
when $\tau=0.2$ and  0.4, while  $p$ is 0 when $\tau=0.6$ and 0.8.

After the order $p$ of model \eqref{section3_eq1} is correctly identified, we subsequently fit
the selected model to data.
Let  $\bm{\phi}=(\phi_0,\phi_1,...,\phi_p)^{\prime}$ be an any parameter vector in  model \eqref{section3_eq1} and $\bm{\phi}(\tau)=((\phi_0(\tau),\phi_1(\tau),...,\phi_p(\tau))^{\prime}$ be the true value of $\bm{\phi}$.
Consider
\[
\widetilde{\bm{\phi}}(\tau)=\argmin_{\bm{\phi}}\sum_{t=p+1}^n\rho_{\tau}(y_t-\bm{\phi}^{\prime}\mathbf{z}^*_{t,p}),
\]
where $\mathbf{z}^*_{t,p}=(1,\mathbf{z}_{t,p}^{\prime})^{\prime}=(1,y_{t-1},...,y_{t-p})^{\prime}$.
In addition, let $\Sigma_{40}=E[\mathbf{z}^*_{t,p}\mathbf{z}_{t,p}^{*\prime}]$, $\Sigma_{41}=E[f_{t-1}(0)\mathbf{z}^*_{t,p}\mathbf{z}_{t,p}^{*\prime}]$, and $\Omega_4=(\tau-\tau^2)\Sigma_{41}^{-1}\Sigma_{40}\Sigma_{41}^{-1}$. We then obtain the following asymptotic property of the estimated parameter.

\begin{thm}\label{thm4}
If $0<\Sigma_{41}<\infty$ and Assumption \ref{assum3} is satisfied, then
\[
\sqrt{n}\{\widetilde{\bm{\phi}}(\tau)-\bm{\phi}(\tau)\}\rightarrow_d N(0,\Omega_4).
\]
\end{thm}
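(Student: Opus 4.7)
The plan is to follow the standard convexity-based approach for quantile regression M-estimators, adapted to the stationary ergodic setting via a martingale argument. I would first reparametrize by setting $\mathbf{u} = \sqrt{n}\{\bm{\phi}-\bm{\phi}(\tau)\}$ and defining the normalized (and centered) objective
\[
Z_n(\mathbf{u}) = \sum_{t=p+1}^n \bigl[\rho_\tau\bigl(e_{t,\tau} - \mathbf{u}'\mathbf{z}^*_{t,p}/\sqrt{n}\bigr) - \rho_\tau(e_{t,\tau})\bigr],
\]
where $e_{t,\tau}$ is the innovation in (\ref{eqb}). Because $\rho_\tau$ is convex, $Z_n$ is convex in $\mathbf{u}$ and is minimized at $\widehat{\mathbf{u}}_n = \sqrt{n}\{\widetilde{\bm{\phi}}(\tau)-\bm{\phi}(\tau)\}$; by the convexity lemma of Pollard and Knight, it suffices to identify the pointwise (or finite-dimensional) limit of $Z_n(\mathbf{u})$ as a strictly convex function and read off the argmin.

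To get that limit, I would invoke the Knight identity $\rho_\tau(u-v)-\rho_\tau(u) = -v\psi_\tau(u) + \int_0^v[I(u\le s)-I(u\le 0)]\,ds$ to decompose
\[
Z_n(\mathbf{u}) = -\mathbf{u}' W_n + R_n(\mathbf{u}),\qquad W_n = \frac{1}{\sqrt{n}}\sum_{t=p+1}^n \psi_\tau(e_{t,\tau})\mathbf{z}^*_{t,p}.
\]
The linear part is handled by the martingale central limit theorem: since the conditional $\tau$th quantile of $y_t$ given $\mathcal{F}_{t-1}$ is $\bm{\phi}(\tau)'\mathbf{z}^*_{t,p}$, we have $E[\psi_\tau(e_{t,\tau})\mid\mathcal{F}_{t-1}]=0$, so $\{\psi_\tau(e_{t,\tau})\mathbf{z}^*_{t,p},\mathcal{F}_t\}$ is a stationary, ergodic, square-integrable martingale difference sequence with conditional variance $E[\psi_\tau(e_{t,\tau})^2\mathbf{z}^*_{t,p}\mathbf{z}^{*\prime}_{t,p}\mid\mathcal{F}_{t-1}] = (\tau-\tau^2)\mathbf{z}^*_{t,p}\mathbf{z}^{*\prime}_{t,p}$; ergodicity and $Ey_t^2<\infty$ give the Lindeberg and convergence-of-variance conditions, so $W_n\Rightarrow W\sim N(0,(\tau-\tau^2)\Sigma_{40})$.

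For the remainder term $R_n(\mathbf{u}) = \sum_{t=p+1}^n\int_0^{\mathbf{u}'\mathbf{z}^*_{t,p}/\sqrt{n}}[I(e_{t,\tau}\le s)-I(e_{t,\tau}\le 0)]\,ds$, I would first compute the conditional expectation given $\mathcal{F}_{t-1}$, which equals $\int_0^{\mathbf{u}'\mathbf{z}^*_{t,p}/\sqrt{n}}[F_{t-1}(s)-F_{t-1}(0)]\,ds$. A first-order Taylor expansion (justified by uniform integrability of $f_{t-1}$ on $[-\pi,\pi]$ in Assumption \ref{assum3}) turns this into $\tfrac{1}{2n}f_{t-1}(0)(\mathbf{u}'\mathbf{z}^*_{t,p})^2(1+o(1))$, so by ergodicity $E[R_n(\mathbf{u})\mid\mathcal{F}_{\cdot}]\to \tfrac{1}{2}\mathbf{u}'\Sigma_{41}\mathbf{u}$. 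Then I would control the martingale fluctuation $R_n(\mathbf{u}) - E[R_n(\mathbf{u})\mid\mathcal{F}_{t-1}]$ by bounding its conditional variance (which is $O(n^{-1/2})$ uniformly on compacts) and conclude $R_n(\mathbf{u})\to_p \tfrac{1}{2}\mathbf{u}'\Sigma_{41}\mathbf{u}$. This bounding of the stochastic remainder under dependence, together with verifying the Taylor expansion uniformly over compact sets of $\mathbf{u}$, is the main technical obstacle, since the innovations $e_{t,\tau}$ are not i.i.d.\ but conditionally absolutely continuous with densities only controlled via Assumption \ref{assum3}.

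Putting the pieces together, $Z_n(\mathbf{u})\to_d Z(\mathbf{u}) := -\mathbf{u}'W + \tfrac{1}{2}\mathbf{u}'\Sigma_{41}\mathbf{u}$, which is strictly convex since $\Sigma_{41}>0$, with unique minimizer $\Sigma_{41}^{-1}W$. The convexity lemma then yields $\widehat{\mathbf{u}}_n = \sqrt{n}\{\widetilde{\bm{\phi}}(\tau)-\bm{\phi}(\tau)\}\to_d \Sigma_{41}^{-1}W \sim N(0,(\tau-\tau^2)\Sigma_{41}^{-1}\Sigma_{40}\Sigma_{41}^{-1}) = N(0,\Omega_4)$, as claimed.
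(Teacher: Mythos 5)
Your proposal is correct and follows essentially the same route as the paper: the Knight identity applied to the centered, reparametrized objective, a quadratic approximation whose linear term is handled by the martingale CLT (yielding $N(0,(\tau-\tau^2)\Sigma_{40})$) and whose remainder converges to $\tfrac{1}{2}\mathbf{u}'\Sigma_{41}\mathbf{u}$, followed by the convexity lemma of Knight/Pollard to pass to the argmin. The paper is terser --- it cites Koenker and Xiao (2006) for the quadratic approximation and Knight (1998) for the Bahadur representation --- whereas you spell out the martingale decomposition of the remainder term, but the substance is the same.
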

\noindent
The above result is similar to that of Theorem 2 in \cite{Koenker_Xiao2006}, although we make
different assumptions. The $\Omega_4$ in the above theorem can be estimated by applying the same techniques used for the estimation of $\Omega_3$.

\subsection{Model  diagnostics}

For the errors $\{e_{t,\tau}\}$ defined  in \eqref{eqb}, we employ
equation (\ref{eqc}) and the fact that ${Q}_{\tau,e_{t,\tau}}=0$, and  obtain
 QACF between $\{e_{t,\tau}\}$ and $\{e_{t-k,\tau}\}$ as follows,
\[
\rho_{k,\tau}=\frac{E\{\psi_{\tau}(e_{t,\tau})[e_{t-k,\tau}-E(e_{t,\tau})]\}} {\sqrt{(\tau-\tau^2)
{\sigma}_e^2}},
\]
where ${\sigma}_e^2=\text{var}(e_{t,\tau})$.
Suppose that the QAR model is correctly specified.
We can show that $\rho_{k,\tau}=0$ for $k>0$. Hence, we are able to use
$\rho_{k,\tau}$ to assess the model fit.
In the sample version, we consider the residuals of the QAR model,
\[
\widetilde{e}_{t,\tau}=y_t-\widetilde{\phi}_0(\tau)-\widetilde{\phi}_1(\tau)y_{t-1}-\cdots-\widetilde{\phi}_p(\tau)y_{t-p},
\]
for $t=p+1,...,n$, and $\widetilde{e}_{t,\tau}=0$ for $t=1,...,p$.
It can be verified that the $\tau$th empirical quantile of $\{\widetilde{e}_{t,\tau}\}$ is zero.
Based on this fact and equation (\ref{seqc}), we obtain the estimation of $\rho_{k,\tau}$,
\[
r_{k,\tau}=\frac{1}{\sqrt{(\tau-\tau^2)\widetilde{\sigma}_e^2}}\cdot \frac{1}{n}\sum_{t=k+1}^n \psi_{\tau}(\widetilde{e}_{t,\tau})(\widetilde{e}_{t-k,\tau}-\widetilde{\mu}_e),
\]
where $k$ is a positive integer, $\widetilde{\mu}_e=n^{-1}\sum_{t=k+1}^n\widetilde{e}_{t,\tau}$, $\widetilde{\sigma}_e^2=n^{-1}\sum_{t=k+1}^n(\widetilde{e}_{t,\tau}-\widetilde{\mu}_e)^2$,  and the $\tau$th empirical quantile of $\{\widetilde{e}_{t,\tau}\}$ is zero. We name
$r_{k,\tau}$ the sample QACF of residuals.

Adapting the classical linear time series approach \citep{Li2004}, we examine the significance of $\{r_{k,\tau}\}$ individually and jointly.
For the given  positive integer $K$,
let $\mathbf{e}_{t-1,K}=(e_{t-1,\tau},...,e_{t-K,\tau})^{\prime}$, $\Sigma_{50}=E[\mathbf{e}_{t-1,K}\mathbf{z}_{t,p}^{*\prime}]$, $\Sigma_{51}=E[f_{t-1}(0)\mathbf{e}_{t-1,K}\mathbf{z}_{t,p}^{*\prime}]$, and
\[
\Omega_5=\frac{1}{\sigma_e^2}\{E(\mathbf{e}_{t-1,K}\mathbf{e}_{t-1,K}^{\prime}) +\Sigma_{51}\Sigma_{41}^{-1}\Sigma_{40}\Sigma_{41}^{-1}\Sigma_{51}^{\prime} -\Sigma_{51}\Sigma_{41}^{-1}\Sigma_{50}^{\prime}-\Sigma_{50}\Sigma_{41}^{-1}\Sigma_{51}^{\prime}\}.
\]
Then, we obtain the asymptotic distribution of $R_{\tau}=(r_{1,\tau},...,r_{K,\tau})^{\prime}$ given below.

\begin{thm}\label{thm5}
Assume that $0<\Sigma_{41}<\infty$, $\Sigma_{51}<\infty$, and Assumption \ref{assum3} holds. We then have
\[
\sqrt{n}R_{\tau}\rightarrow_dN(0,\Omega_5).
\]
\end{thm}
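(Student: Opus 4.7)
\textbf{Proof plan for Theorem \ref{thm5}.} The plan is to reduce the statistic $\sqrt{n}R_\tau$ to a sum of martingale differences based on the \emph{true} innovations $\{\psi_\tau(e_{t,\tau})\}$ and then invoke the martingale CLT. Since $K$ is fixed I will argue componentwise and assemble the covariance at the end.

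First I would write $\widetilde e_{t,\tau}=e_{t,\tau}-(\widetilde{\bm\phi}(\tau)-\bm\phi(\tau))'\mathbf{z}^*_{t,p}$ and replace the deterministic normalization $\sqrt{(\tau-\tau^2)\widetilde\sigma_e^2}$ by $\sqrt{(\tau-\tau^2)\sigma_e^2}$ at the cost of $o_p(1)$, using $\widetilde\sigma_e^2\to_p\sigma_e^2$ (which follows from stationarity/ergodicity together with the $\sqrt n$-consistency of $\widetilde{\bm\phi}(\tau)$ given by Theorem~\ref{thm4}). I would also show $\widetilde\mu_e=O_p(n^{-1/2})$, so that $\widetilde e_{t-k,\tau}-\widetilde\mu_e$ can be replaced by $\widetilde e_{t-k,\tau}$ up to a negligible term. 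Next, because $\widetilde{\bm\phi}(\tau)-\bm\phi(\tau)=O_p(n^{-1/2})$, substituting $e_{t-k,\tau}$ for $\widetilde e_{t-k,\tau}$ in the lagged factor produces a remainder that can be absorbed; this is the easy direction since the lag is multiplicative.

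The crux is the non-smooth factor $\psi_\tau(\widetilde e_{t,\tau})$. Here I would establish the empirical-process linearization
\[
\frac{1}{\sqrt n}\sum_{t=k+1}^{n}\bigl\{\psi_\tau(\widetilde e_{t,\tau})-\psi_\tau(e_{t,\tau})+f_{t-1}(0)(\widetilde{\bm\phi}(\tau)-\bm\phi(\tau))'\mathbf{z}^*_{t,p}\bigr\}\,e_{t-k,\tau}=o_p(1),
\]
the standard stochastic-equicontinuity bound for quantile-type objective functions (as used in \citealt{Koenker_Xiao2006}), whose validity rests on Assumption~\ref{assum3} together with the moment conditions $0<\Sigma_{41}<\infty$ and $\Sigma_{51}<\infty$. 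Plugging in the Bahadur representation
\[
\sqrt n(\widetilde{\bm\phi}(\tau)-\bm\phi(\tau))=\Sigma_{41}^{-1}\,\frac{1}{\sqrt n}\sum_{t=p+1}^{n}\psi_\tau(e_{t,\tau})\mathbf{z}^*_{t,p}+o_p(1),
\]
obtainable from Theorem~\ref{thm4}, and stacking across $k=1,\dots,K$, I get
\[
\sqrt n R_\tau=\frac{1}{\sqrt{(\tau-\tau^2)\sigma_e^2}}\cdot\frac{1}{\sqrt n}\sum_{t}\psi_\tau(e_{t,\tau})\bigl[\mathbf{e}_{t-1,K}-\Sigma_{51}\Sigma_{41}^{-1}\mathbf{z}^*_{t,p}\bigr]+o_p(1).
\]

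Finally I would apply the martingale central limit theorem to the leading sum. The summand is a martingale difference with respect to $\mathcal{F}_{t-1}$ because $E[\psi_\tau(e_{t,\tau})\mid\mathcal{F}_{t-1}]=0$ and $\mathbf{e}_{t-1,K}-\Sigma_{51}\Sigma_{41}^{-1}\mathbf{z}^*_{t,p}$ is $\mathcal{F}_{t-1}$-measurable; the conditional variance is $(\tau-\tau^2)\,E[V_tV_t']$ with $V_t=\mathbf{e}_{t-1,K}-\Sigma_{51}\Sigma_{41}^{-1}\mathbf{z}^*_{t,p}$. Expanding and dividing by $(\tau-\tau^2)\sigma_e^2$ yields exactly $\Omega_5$ after recognizing the four cross-terms $E(\mathbf{e}_{t-1,K}\mathbf{e}_{t-1,K}')$, $\Sigma_{50}\Sigma_{41}^{-1}\Sigma_{51}'$, $\Sigma_{51}\Sigma_{41}^{-1}\Sigma_{50}'$ and $\Sigma_{51}\Sigma_{41}^{-1}\Sigma_{40}\Sigma_{41}^{-1}\Sigma_{51}'$. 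Ergodicity of $\{y_t\}$ ensures the conditional variances and Lindeberg condition hold, completing the proof.

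The hard part, as usual with quantile statistics, is the stochastic-equicontinuity step that justifies the linearization of $\psi_\tau(\widetilde e_{t,\tau})$ around $\psi_\tau(e_{t,\tau})$: the integrand is discontinuous, so one cannot simply Taylor-expand and must instead bound the fluctuation of a chaining-type process over a $\sqrt n$-shrinking neighborhood of $\bm\phi(\tau)$, exactly where Assumption~\ref{assum3} (uniform integrability of the conditional density near zero) is used.
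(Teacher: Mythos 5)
Your route is essentially the paper's: write $\widetilde e_{t,\tau}=e_{t,\tau}-(\widetilde{\bm\phi}(\tau)-\bm\phi(\tau))'\mathbf{z}^*_{t,p}$, replace $\widetilde\sigma_e^2$ by $\sigma_e^2$, linearize the non-smooth factor $\psi_\tau(\widetilde e_{t,\tau})$ around $\psi_\tau(e_{t,\tau})$ via the same stochastic-equicontinuity argument used for Theorems \ref{thm1} and \ref{thm3}, insert the Bahadur representation from Theorem \ref{thm4}, and close with the martingale CLT and the Cram\'er--Wold device. Your final asymptotic representation of $\sqrt n R_\tau$ and the expansion of the conditional variance into the four blocks of $\Omega_5$ coincide exactly with what the paper derives.

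One intermediate claim is wrong, however: $\widetilde\mu_e$ is \emph{not} $O_p(n^{-1/2})$. The errors $e_{t,\tau}$ are centered at their $\tau$th quantile, not at their mean, so $E(e_{t,\tau})\neq 0$ in general (in the paper's own simulation design $e_{t,\tau}=e_t-\Phi^{-1}(\tau)$), and $\widetilde\mu_e\rightarrow_p E(e_{t,\tau})$, which is a nonzero constant for $\tau\neq 1/2$. The centering term is nonetheless negligible, but for a different reason: the paper factors it out as $\widetilde\mu_e\cdot n^{-1}\sum_{t}\psi_\tau(\widetilde e_{t,\tau})$ and invokes the subgradient (first-order) condition of the fitted quantile regression, which forces $\bigl|\sum_{t}\psi_\tau(\widetilde e_{t,\tau})\bigr|$ to stay bounded, so the whole term is $O_p(n^{-1})$. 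You should replace your $\widetilde\mu_e=O_p(n^{-1/2})$ step with this argument (or, equivalently, keep the centering and observe that a constant shift of the lagged factor is annihilated in the limit because $E[\psi_\tau(e_{t,\tau})\mid\mathcal{F}_{t-1}]=0$). With that repair, the proof goes through as you outline it.
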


\noindent
Applying the same techniques as used in the estimate of $\Omega_3$,
we are able to estimate the asymptotic variance $\Omega_5$ and denote it
$\widehat{\Omega}_5$. In addition, let the $k$-th diagonal element of $\widehat{\Omega}_5$ be
$\widehat{\Omega}_{5k}$. Then, one can employ $r_{k,\tau}/\sqrt{\widehat{\Omega}_{5k}}$
to examine the significance of the $k$-th lag in the residual series.

To check the significance of $R_{\tau}$ jointly, it is natural to consider the test statistic $R_{\tau}^{\prime}\widehat{\Omega}_5^{-1}R_{\tau}$.
However, $\widehat{\Omega}_5$ may not be invertible. Hence, we  approximate
$\Omega_5$ by $I_K-\sigma_e^{-2}\Sigma_{50}\Sigma_{40}^{-1}\Sigma_{50}^{\prime}$, which
holds under the assumption that $\{e_{t,\tau}\}$ is an independent and identically distributed
($i.i.d.$) sequence and $f_{t-1}(0)$ is a constant. The resulting matrix
is idempotent and has  rank  $K-p$. This allows us to obtain
 a Box-Pierce type \citep{Box_Pierce1970} test statistic,
\[
Q_{BP}(K)=n\sum_{j=1}^Kr_{j,\tau}^2,
\]
which  follows an approximately
chi-squared distribution with $K-p$ degrees of freedom, $\chi_{K-p}^2$.
Accordingly, $Q_{BP}(K)$ can be used to test the significance of $\rho_{1,\tau}$ to
$\rho_{K,\tau}$ jointly.

\section{Simulations and an empirical example}

\subsection{Simulation studies}

We conduct five simulation experiments to assess the finite-sample performance of the proposed methods.
Specifically, the first simulation experiment is for the sample quantile correlation and the sample quantile partial correlation proposed in Section 2, and the last four experiments are, respectively,
 for identification, estimation, and diagnosis as  introduced in Section 3.
In all experiments, we conduct 1000 realizations for each combination of sample sizes $n=50$, 100, and 200 and quantiles, $\tau=0.25$, 0.50, and  0.75.

In the first simulation experiment, we generate the $i.i.d.$ samples $\{(X_i,Y_i,Z_i),i=1,...,n\}$ from the following multivariate normal distribution,
\[
(X,Y,Z)\sim N\left\{\mathbf{0},\left(\begin{array}{ccc} 1.0 &0.5&0.5\\ 0.5&1.0&0.5\\ 0.5&0.5&1.0 \end{array}\right)\right\}.
\]
After algebraic simplification, we obtain that  $$\textrm{qcor}_{\tau}\{Y,X\} =0.5\exp\{-0.5[\Phi^{-1}(\tau)]^2\}/\sqrt{(\tau-\tau^2)2\pi},$$ and $\textrm{qpcor}_{\tau}\{Y,X|Z\}=\textrm{qcor}_{\tau}\{Y,X\}/\sqrt{3}$, where $\Phi(\cdot)$ is the cumulative standard normal distribution.
Tables \ref{table1} and \ref{table2} present the bias (BIAS) and  estimated standard deviation (ESD), respectively, of the sample quantile correlations $\widehat{\textrm{qcor}}_{\tau}\{Y,X\}$ and the sample quantile partial correlations $\widehat{\textrm{qpcor}}_{\tau}\{Y,X|Z\}$,
calculated from  1000 realizations.

To estimate the asymptotic variances ${\Omega}_1$ and ${\Omega}_2$, we mainly need to
estimate the quantities ${\mu}_{X|Y}$ and ${\Sigma}_{20}$, addressed in Subsection 2.2.
To this end, we
employ the Nadaraya-Watson approach with the two bandwidth selection methods proposed by \cite{Bofinger1975}
and \cite{Hall_Sheather1988}, respectively, which are given below.
\[
h_B=n^{-1/5}\left\{\frac{4.5\phi^4(\Phi^{-1}(\tau))}{[2(\Phi^{-1}(\tau))^2+1]^2}\right\}^{1/5} \hspace{5mm}\text{and}\hspace{5mm}
h_{HS}=n^{-1/3}z_{\alpha}^{2/3}\left\{\frac{1.5\phi^2(\Phi^{-1}(\tau))}{2(\Phi^{-1}(\tau))^2+1}\right\}^{1/3},
\]
where $\phi(\cdot)$ is the standard normal density function, $z_{\alpha}=\Phi^{-1}(1-\alpha/2)$, for the construction of $1-\alpha$ confidence intervals, and $\alpha$ is set to 0.05. Furthermore, we consider
 two more bandwidths, $0.6h_B$ and $3h_{HS}$, suggested by \cite{Koenker_Xiao2006}. In sum, we have four bandwidth choices.
 The resulting asymptotic variance estimates, $\widehat{\Omega}_1$ and $\widehat{\Omega}_2$, yield
their correspondingly asymptotic standard deviations (ASDs) given in Tables \ref{table1}
 and \ref{table2}. Both tables indicate that the ASDs are close to their corresponding ESDs even when $n=50$,
and they become smaller as the same size gets larger.
In addition,  biases are close to zero, and decrease as the sample size increases.
Moreover, all four bandwidths lead to similar results, although  $3h_{HS}$ is slightly better than the others.

The second Monte Carlo experiment studies the performance of QPACF for identifying the order of the QAR models. we generate the data from the following process,
\begin{equation}\label{eqa}
y_t=0.1+0.5y_{t-1}+e_t,
\end{equation}
where $\{e_t\}$ is an $i.i.d$ sequence with standard normal distribution.
Under the above setting, it can be shown that $\Omega_3=1$.
We then  employ the approach of \cite{Hendricks_Koenker1991} with the four bandwidths used in the first experiment to estimate the density function, $f_{t-1}(0)$. This allows us to further estimate the variance matrix $\Omega_3$ in Theorem \ref{thm3} (see Subsection 3.1).
Table \ref{table3} presents  the bias and estimated standard deviation
of $\widetilde{\phi}_{kk,\tau}$  at $k=2$, 4, and 6.
It shows that biases are small even when $n=50$, and the ESDs are close to the ASDs as well as their theoretical value $1/\sqrt{n}$.

The third simulation experiment investigates the finite-sample performance of the QAR estimates.
We use the same data generated from  \eqref{eqa}, and then fit it with the QAR model
\eqref{section3_eq1} with $p=1$.
In addition, we employ the same approach as given in the second experiment to estimate $f_{t-1}(0)$.
As a result, the variance matrix $\Omega_4$ in Theorem \ref{thm4}
can be estimated (see Subsection 3.1).
Table \ref{table4} presents the biases, estimated standard deviations, and
asymptotic standard deviations of  parameter estimates $\widetilde{\phi}_{0}(\tau)$ and $\widetilde{\phi}_{1}(\tau)$.
It shows that biases are close to zero even when the sample size is as small as $n=50$. In addition, the ESDs are close to the ASDs, and both of them decrease as the sample size increases. Moreover, there is no discernible difference among the four bandwidths, although  $3h_{HS}$ often yields the smallest ASD.

The fourth simulation experiment examines the finite-sample performance of the sample QACF of residuals individually via the asymptotic result in Theorem \ref{thm5}.
All settings are the same as those in the third experiment.
Table \ref{table5} presents  the biases, estimated standard deviations, and asymptotic standard deviations of  $r_{k,\tau}$ at $k=1$, 3, and 5. Apparently, biases are small and the ASDs are close to
their corresponding ESDs.

Finally, the fifth experiment studies the approximate test statistic $Q_{BP}(K)$.
To this end, we generate data from the following process,
\[
y_t=0.5y_{t-1}+\phi y_{t-2}+e_t,
\]
where $\{e_t\}$ are $i.i.d.$ standard normal random variables. In addition,
$\phi=0$ corresponds to the null hypothesis, while $\phi>0$ is associated with the alternative hypothesis.
Moreover, the nominal level is  5\%.   Table \ref{table6} reports sizes and powers of  $Q_{BP}(K)$ with $K=6$. It shows that
$Q_{BP}(K)$ controls the size well, and its power increases quickly  when the sample size or $\phi$
becomes larger.
Consequently, the above six simulation studies  perform
satisfactorily and support our theoretical findings.

\subsection{Nasdaq Composite}

This example considers the log return (as a percentage) of the daily closing price on the Nasdaq Composite from January 1, 2002 to December 31, 2007.
There are 1,235 observations in total, and Figure \ref{nasdaq}
depicts the time series plot and the classical sample ACF. It is not surprising to conclude that
these returns (i.e., log returns) are uncorrelated and can be treated as an evidence in support of the fair market theory.
However, \cite{Veronesi1999} found that the stock markets under-react to good news in bad times and over-react to bad news in good times.  Hence, \cite{Baur_Dimpfl_Jung2012}
proposed aligning a good (bad) state with upper (lower) quantiles by fitting their stock returns data with the QAR(1) type models. This motivates us to  employ the general QAR model with our proposed methods to explore the dependence pattern of stock returns at a lower quantile ($\tau=0.2$), the median ($\tau=0.5$), and an upper quantile ($\tau=0.8$).

We first fit the returns at the lower quantile ($\tau=0.2$), and then present its sample QPACF in Panel A of Figure \ref{nasdaq_fit}. It shows that  lags 1, 2, and 13 are significant, which suggests QAR(13) could be considered for model fitting. We then refine the model via the backward variable selection procedure at the $5\%$ significance level. The resulting model is
\begin{equation}\label{naslow}
\widehat{Q}_{0.2}(y_t|\mathcal{F}_{t-1})= -0.4114_{0.0269}+0.1117_{0.0482}y_{t-1}+0.0951_{0.0471}y_{t-2}+0.0992_{0.0457}y_{t-13},
\end{equation}
where the subscripts of parameter estimates are their associated standard errors.
Accordingly, the above coefficients are all significant at the 5\% significance level.
In addition, the second graph in Panel A presents the sample QPACF of residuals,
and no lags stand out. This, together with the $p$-value of $Q_{BP}(18)$ being 0.742, implies that this model is adequate.

We next consider the scenario with $\tau=0.5$. The sample QPACF in Panel B
indicates that all lags are insignificant. Hence, we fit the following model,
\begin{equation}\label{nasmid}
\widehat{Q}_{0.5}(y_t|\mathcal{F}_{t-1})= 0.0036_{0.0170}.
\end{equation}
The above coefficient is not only small, but also not significant. In addition,
none of the lags in the sample QACF of residuals in Panel B show significance. Moreover,  the $p$-value of $Q_{BP}(18)$ is 0.566. Consequently, the above model is appropriate.

Finally, we study the upper quantile scenario with $\tau=0.8$.
The sample QPACF in Panel C exhibits  that  lags 1, 2, 7, 10 and 15 are significant, and suggests QAR(15) could be considered for model fitting.
After refining the model via the backward variable selection procedure,
we obtain
\begin{align}
\begin{split}\label{nashigh}
\widehat{Q}_{0.8}(y_t|\mathcal{F}_{t-1})=  0.3988_{0.0230}&-0.1076_{0.0371}y_{t-1}-0.0825_{0.0412}y_{t-2}\\
&-0.0790_{0.0361}y_{t-10}-0.0802_{0.0302}y_{t-15},
\end{split}
\end{align}
where all coefficients are significant at the $5\%$ significance level.
In addition,  the sample QACF of residuals in Panel C displays that all lags are insignificant.
This, in conjunction with  the $p$-value of $Q_{BP}(18)$ being  0.215, indicates that the above model
fits the data reasonably well.

Based on the  three fitted QAR models, (\ref{naslow}), (\ref{nasmid}), and (\ref{nashigh}), we obtain the following conclusions.
(i.) The lag coefficients at the lower quantile ($\tau=0.2$) are all positive.
This indicates that if the returns in past days have been positive (negative), then today's negative return  is alleviated (even lower). It also implies that stock markets
under-react to good news in bad times.
(ii.) The lag coefficients at the upper quantile ($\tau=0.8$) are all negative.
This shows that if the returns in  past days have been negative  (positive), then today's positive return
 ($\tau=0.8$) is even higher (dampened). As a result,  stock markets
over-react to bad news in good times.
(iii.) The intercept  at ($\tau=0.5$)  has a small value and is insignificant at the 5\% significance level. Thus, the conditional median of returns is almost zero as we expected. In addition,
equation (\ref{nasmid}) indicates that
today's return is not affected by the returns of recent past days.
Although we only report the results of the lower and higher quantiles at $\tau=0.2$ and $\tau=0.8$,
our studies yield the same conclusions across various lower and upper quantiles.
In sum, our proposed methods support Veronesi's (\citeyear{Veronesi1999}) equilibrium explanation for stock market reactions.

\section{Discussion}

In quantile regression models, we propose the quantile correlation and quantile partial correlation.
Then, we apply them to quantile autoregressive models, which yields
the quantile autocorrelation and quantile partial autocorrelation. In practice,
the response time series may depend on exogenous variables. Hence, it is of interest to extend those correlation measures
to the quantile autoregressive model with the exogenous variables given below.
\[
Q_{\tau}(y_t|\mathcal{F}_{t-1})=\phi_0(\tau)+\sum_{i=1}^p\phi_i(\tau)y_{t-i} +\sum_{j=1}^q\beta_j^{\prime}(\tau)\mathbf{x}_{t-j}, \hspace{2mm}\text{for}\hspace{2mm}0<\tau<1,
\]
where $\mathbf{x}_t$ is a vector of time series, and $\phi_i(\tau)$ and $\beta_j(\tau)$ are functions $[0,1]\rightarrow R$, see \cite{Galvao_Montes-Rojas_Park2012}.
In addition, the application of the proposed correlations to the quantile regression model with autoregressive errors is worth further investigation. Clearly, the contribution of the proposed measures is not limited to those two models. For example, variable screening and selection  (e.g., \citealt{Fan_Lv2008}; \citealt{Wang2009}) in quantile regressions are other important topics for future research.
In sum, this paper introduces valuable measures to broaden and facilitate the use of quantile models.

\renewcommand{\thesection}{A}
\section*{Appendix: technical proofs}

\begin{proof}[Proof of Lemma \ref{lem1}]
For $a,b\in R$, denote the function $h(a,b)=E[\rho_{\tau}(Y^*)]$, where $Y^*=Y-a-bX$. We first show that $h(a,b)$ is a continuously differentiable function and has derivatives,
\begin{equation*}
\frac{\partial h(a,b)}{\partial a}=-E[\psi_\tau(Y^*)]=P(Y^*<0)-\tau \hspace{5mm}\text{and}\hspace{5mm} \frac{\partial h(a,b)}{\partial b}=-E[\psi_\tau(Y^*)X].
\end{equation*}
 For $u\neq 0$,
\begin{align}
\begin{split}\label{proof_lem1_eq1}
\rho_{\tau}(u-v)-\rho_{\tau}(u)&=-v\psi_{\tau}(u)+\int_0^v[I(u\leq s)-I(u<0)]ds\\
 &=-v\psi_{\tau}(u)+(u-v)[I(0>u>v)-I(0<u<v)],
\end{split}
\end{align}
see \cite{Koenker_Xiao2006}.
This, together with H\"older's inequality and the fact that $|Y^*|/|X|$ is a continuous random variable,
leads to
\begin{align*}
|\frac{1}{c}&[h(a,b+c)-h(a,b)]+E[\psi_\tau(Y^*)X]|\\
 &=|\frac{1}{c}E[\rho_{\tau}(Y^*-cX)-\rho_{\tau}(Y^*)]+E[\psi_\tau(Y^*)X]|\\
 &=|\frac{1}{c}E\{(Y^*-cX)[I(0>Y^*>cX)-I(0<Y^*<cX)]\}|\\
 &\leq E[|X|I(|Y^*|<|c|\cdot|X|)]\leq (EX^2)^{1/2}[P(|Y^*|/|X|<|c|)]^{1/2},
\end{align*}
which tends to zero as $c\rightarrow 0$.
Accordingly,  $\partial h(a,b)/\partial b$ is obtained.
Analogously, we have  $\partial h(a,b)/\partial a$.
By H\"older's inequality, we can further prove the continuity of ${\partial h(a,b)}/{\partial b}$.
Moreover, the continuity of both $X$ and $Y$ implies that ${\partial h(a,b)}/{\partial a}$ is a continuous function.
It is noteworthy that
 $h(a,b)$ is a convex function with $\lim_{a^2+b^2\rightarrow\infty}h(a,b)=+\infty$.
This, in conjunction with the above results, demonstrates that
the values of $a_0$ and $b_0$ satisfy
\begin{equation}\label{proof_lem1_eq2}
E[\psi_\tau(Y-a_0-b_0X)]=0\hspace{5mm}\text{and}\hspace{5mm}E[\psi_\tau(Y-a_0-b_0X)X]=0.
\end{equation}

We next show the uniqueness of $(a_0,b_0)$. Suppose that there is another pair of values $(a_1,b_1)$ such that $h(a_1,b_1)=h(a_0,b_0)=\argmin_{a,b}E[\rho_{\tau}(Y-a-bX)]$. Let $Y_0=Y-a_0-b_0X$ and $\xi=(a_1-a_0)+(b_1-b_0)X$. Then, by \eqref{proof_lem1_eq1} and \eqref{proof_lem1_eq2},
\begin{align*}
0&=h(a_1,b_1)-h(a_0,b_0)=E[\rho_{\tau}(Y_0-\xi)-\rho_{\tau}(Y_0)]\\
 &=-E[\xi \psi_{\tau}(Y_0)]+E[(Y_0-\xi)I(0>Y_0>\xi)]+E[(\xi-Y_0)I(0<Y_0<\xi)]\\
 &=E[(Y_0-\xi)I(0>Y_0>\xi)]+E[(\xi-Y_0)I(0<Y_0<\xi)].
\end{align*}
Note that both $(Y_0-\xi)I(0>Y_0>\xi)$ and $(\xi-Y_0)I(0<Y_0<\xi)$ are nonnegative random variables, and $Y_0-\xi$ is a continuous random variable. Thus,  with probability one, $I(0>Y_0>\xi)=I(0<Y_0<\xi)=0$, which implies  $(a_1,b_1)= (a_0,b_0)$.

Finally,
 if $b_0=0$, then  \eqref{proof_lem1_eq2} leads to $a_0=Q_{\tau,Y}$ and $\textrm{qcov}_{\tau}\{Y,X\}=E[\psi_\tau(Y-a_0-b_0X)X]=0$. On the other hand, if $\textrm{qcov}_{\tau}\{Y,X\}=0$, then equation \eqref{proof_lem1_eq2} with $(a_0,b_0)=(Q_{\tau,Y},0)$ holds. By the uniqueness property, $b_0=0$, which
completes the proof.
\end{proof}

\begin{proof}[Proof of Lemma \ref{lem2}]
Let $ Y^*=Y-\alpha_2-\beta_2^{\prime}\mathbf{Z}$ and
\[
(\alpha_4,\beta_4^{\prime},\gamma_4)=\argmin_{\alpha,\beta,\gamma}E[\rho_{\tau}( Y^* -\alpha-\beta^{\prime}\mathbf{Z}-\gamma X)].
\]
Since the random vector $(X,Y^*,\mathbf{Z}^{\prime})^{\prime}$ has a joint density, we apply
 similar techniques to those in  the proof of Lemma \ref{lem1}  to show that
\begin{equation}\label{proof_lem2_eq1}
E[\psi_{\tau}(Y^*)]=0, \hspace{5mm} E[\psi_{\tau}(Y^*)\mathbf{Z}]=\mathbf{0},
\end{equation}
and the values of $\alpha_4$, $\beta_4^{\prime}$ and $\gamma_4$ are unique and satisfy
\begin{equation}\label{proof_lem2_eq2}
E[\psi_{\tau}(Y^*-\alpha_4-\beta_4^{\prime}\mathbf{Z}-\gamma_4X)(1,\mathbf{Z}^{\prime},X)^{\prime}]=\mathbf{0},
\end{equation}
where $\mathbf{0}$ is a $(q+2)\times 1$ zero vector.

From \eqref{proof_lem2_eq2}, if $(\alpha_4,\beta_4^{\prime},\gamma_4)^{\prime}=\mathbf{0}$, then $\textrm{qcov}_{\tau}\{Y^*,X\}=E[\psi_{\tau}(Y^*)X]=0$. On the other hand,  $\textrm{qcov}_{\tau}\{Y^*,X\}=0$, together with \eqref{proof_lem2_eq1}, implies that equation \eqref{proof_lem2_eq2} with $(\alpha_4,\beta_4^{\prime},\gamma_4)^{\prime}=\mathbf{0}$ holds. Accordingly, we have shown that $\textrm{qcov}_{\tau}\{Y^*,X\}=0$ if and only if $(\alpha_4,\beta_4^{\prime},\gamma_4)^{\prime}=\mathbf{0}$.
Based on the definitions of $(\alpha_2, \beta_2^{\prime})$ and $(\alpha_3,\beta_3^{\prime},\gamma_3)$
in Subsection 2.1,
we further have that
$\alpha_4=\alpha_3-\alpha_2$, $\beta_4=\beta_3-\beta_2$, and $\gamma_4=\gamma_3$.
Finally, using the fact that $\textrm{qpcor}_{\tau}\{Y,X|\mathbf{Z}\} =\textrm{qcov}_{\tau}\{Y^*,X\}/\sqrt{(\tau-\tau^2)\sigma_{X|\mathbf{Z}}^2}$ completes the proof.
\end{proof}

\begin{proof}[Proof of Lemma \ref{lem3}]
For $k=p$, let
\[
(\alpha_3,\beta_3^{\prime},\gamma_3)=\argmin_{\alpha,\beta,\gamma} E[\rho_{\tau}(y_{t}-\alpha-\beta^{\prime}\mathbf{z}_{t,p-1}-\gamma y_{t-p})].
\]
It is noteworthy that $(\alpha_3,\beta_3^{\prime},\gamma_3)=(\phi_0(\tau),\phi_1(\tau),...,\phi_p(\tau))$. Since $\phi_p(\tau)\neq 0$, we apply Lemma \ref{lem2} and are able to show that $\phi_{pp,\tau}\neq 0$.

Let
$e_{t,\tau}=y_t-\phi_0(\tau)-\phi_1(\tau)y_{t-1}-\cdots -\phi_p(\tau)y_{t-p}$.
By \eqref{section3_eq1}, $I(e_{t,\tau}>0)$ is independent of $y_{t-k}$
for any $k>0$. In addition,
$(\alpha_2,\beta_2^{\prime})=(\phi_0(\tau),\phi_1(\tau),...,\phi_p(\tau),\mathbf{0}^{\prime})$
for $k> p$, where $\mathbf{0}$ is $(k-p)\times 1$ vector. Hence, $\phi_{kk,\tau}=0$ for $k>p$.
\end{proof}


\begin{proof}[Proof of Theorem \ref{thm1}]
For $u\neq 0$, we have that
\[I(u-v<0)-I(u<0)=I(v>u>0)-I(v<u<0).\]
Using this result, we then obtain
\begin{equation}\label{thm1_eq1}
\frac{1}{n}\sum_{i=1}^n \psi_{\tau}(Y_i-\widehat{Q}_{\tau,Y})(X_i-\bar{X})
  =\frac{1}{n}\sum_{i=1}^n \psi_{\tau}(Y_i-{Q}_{\tau,Y})X_i+\frac{1}{n}A_n-\bar{X}\cdot \frac{1}{n}\sum_{i=1}^n\psi_{\tau}(Y_i-\widehat{Q}_{\tau,Y}),
\end{equation}
where $A_n=\sum_{i=1}^ng_{\tau}(Y_i,{Q}_{\tau,Y},\widehat{Q}_{\tau,Y})X_i $ and
\begin{align*}
g_{\tau}&(Y_i,{Q}_{\tau,Y},\widehat{Q}_{\tau,Y}) \\
&=\psi_{\tau}(Y_i-\widehat{Q}_{\tau,Y})-\psi_{\tau}(Y_i-{Q}_{\tau,Y}) =-[I(Y_i<\widehat{Q}_{\tau,Y})-I(Y_i<{Q}_{\tau,Y})]\\
 &=I(\widehat{Q}_{\tau,Y}-{Q}_{\tau,Y}<Y_i-{Q}_{\tau,Y}<0) -I(\widehat{Q}_{\tau,Y}-{Q}_{\tau,Y}>Y_i-{Q}_{\tau,Y}>0).
\end{align*}

It can be shown that
\[
|\frac{1}{n}\sum_{i=1}^n\psi_{\tau}(Y_i-\widehat{Q}_{\tau,Y})| =|\tau-\frac{1}{n}\sum_{i=1}^nI(Y_i-\widehat{Q}_{\tau,Y})| =|\tau-\frac{[n\tau]}{n}|\leq \frac{1}{n}.
\]
This, together with the  law of large numbers, implies the last term of \eqref{thm1_eq1} satisfying
\begin{equation}\label{thm1_eq2}
\bar{X}\cdot \frac{1}{n}\sum_{i=1}^n\psi_{\tau}(Y_i-\widehat{Q}_{\tau,Y})=O_p(n^{-1}).
\end{equation}

We next consider the second term on the right-hand side of \eqref{thm1_eq1}.
For any $v\in R$, denote
\[
\xi_n(v) =\frac{1}{\sqrt{n}}\sum_{i=1}^n\{g_{\tau}(Y_i,{Q}_{\tau,Y},{Q}_{\tau,Y}+n^{-1/2}v)-E[g_{\tau}(Y_i,{Q}_{\tau,Y},{Q}_{\tau,Y}+n^{-1/2}v)|X_i]\}X_i,
\]
where
\[
E[g_{\tau}(Y_i,{Q}_{\tau,Y},{Q}_{\tau,Y}+n^{-1/2}v)|X_i]= -\int_{{Q}_{\tau,Y}}^{{Q}_{\tau,Y}+n^{-1/2}v}f_{Y_i|X_i}(y)dy
\]
and $f_{Y_i|X_i}(\cdot)$ is the conditional density of $Y_i$ given $X_i$.
Then, by H\"older's inequality, we have that
\begin{align}
\begin{split}\label{thm1_eq3}
E[\xi_n(v)]^2&=E[g_{\tau}(Y_i,{Q}_{\tau,Y},{Q}_{\tau,Y}+n^{-1/2}v)X_i]^2 \\
 &\leq [P(|Y_i-{Q}_{\tau,Y}|<n^{-1/2}v)]^{1/2}[EX_i^4]^{1/2}=o(1).
\end{split}
\end{align}
After algebraic simplification, we further obtain
\begin{align*}
\sup_{|v_1-v|<\delta}&|\xi_n(v_1)-\xi_n(v)|\\ &\leq \sup_{|v_1-v|<\delta} \frac{1}{\sqrt{n}}\sum_{i=1}^n|\{g_{\tau}(v_1)-g_{\tau}(v)\}X_i| +E[|\{g_{\tau}(v_1)-g_{\tau}(v)\}X_i||X_i]\\
&=\frac{1}{\sqrt{n}}\sum_{i=1}^n|\{g_{\tau}(v_1^*)-g_{\tau}(v)\}X_i| +E[|\{g_{\tau}(v_1^*)-g_{\tau}(v)\}X_i||X_i],
\end{align*}
where $v_1^*$ takes the value of $v+\delta$ or $v-\delta$.
Hence,
\begin{align}
\begin{split}\label{thm1_eq4}
E&\sup_{|v_1-v|<\delta}|\xi_n(v_1)-\xi_n(v)|\\
& \leq 2\sqrt{n} E|\{g_{\tau}(v_1^*)-g_{\tau}(v)\}X_i| \\
&=2\sqrt{n}E\left|\int_{{Q}_{\tau,Y}+n^{-1/2}v}^{{Q}_{\tau,Y}+n^{-1/2}v_1^*}f_{Y_i|X_i}(y)dyX_i\right|\\
&\leq \delta\cdot 2E[\sup_{|y|\leq\pi}f_{Y_i|X_i}({Q}_{\tau,Y}+y)|X_i|],
\end{split}
\end{align}
where $|n^{-1/2}v|<\pi$ and $|n^{-1/2}v_1^*|<\pi$ when $n$ is large.
Both (\ref{thm1_eq3}) and (\ref{thm1_eq4}), in conjunction with the theorem's assumptions and the finite converging theorem,
imply that $E\sup_{|v|\leq M}|\xi_n(v)|=o(1)$ for any $M>0$.
In addition, applying the theorem in Section 2.5.1 of \cite{Serfling1980}, we  have
\[
\sqrt{n}(\widehat{Q}_{\tau,Y}-{Q}_{\tau,Y})=f_Y^{-1}({Q}_{\tau,Y})\cdot \frac{1}{\sqrt{n}}\sum_{i=1}^n\psi_{\tau}(Y_i-{Q}_{\tau,Y})+o_p(1)=O_p(1).
\]
Accordingly,
\begin{align}
\begin{split}\label{thm1_eq8}
\frac{1}{\sqrt{n}}A_n &= -\frac{1}{\sqrt{n}}\sum_{i=1}^n \int_{{Q}_{\tau,Y}}^{{Q}_{\tau,Y}+(\widehat{Q}_{\tau,Y}-{Q}_{\tau,Y})}f_{Y_i|X_i}(y)dyX_i +o_p(1)\\
 &=-(\widehat{Q}_{\tau,Y}-{Q}_{\tau,Y})\frac{1}{\sqrt{n}}\sum_{i=1}^n f_{Y_i|X_i}({Q}_{\tau,Y})X_i +o_p(1)\\
 &=-\frac{E[f_{Y_i|X_i}({Q}_{\tau,Y})X_i]}{f_Y({Q}_{\tau,Y})}\cdot \frac{1}{\sqrt{n}}\sum_{i=1}^n\psi_{\tau}(Y_i-{Q}_{\tau,Y})+o_p(1).
\end{split}
\end{align}

Subsequently, using  \eqref{thm1_eq1}, \eqref{thm1_eq2}, and \eqref{thm1_eq8}, we obtain that
\begin{align}
\begin{split}\label{thm1_eq6}
\sqrt{n}&\left[\frac{1}{n}\sum_{i=1}^n \psi_{\tau}(Y_i-\widehat{Q}_{\tau,Y})(X_i-\bar{X})- \textrm{qcov}_{\tau}\{Y,X\}\right]\\
&=\frac{1}{\sqrt{n}}\sum_{i=1}^n [ \psi_{\tau}(Y_i-\widehat{Q}_{\tau,Y})(X_i-\bar{X})-\textrm{qcov}_{\tau}\{Y,X\}]\\
  &=\frac{1}{\sqrt{n}}\sum_{i=1}^n [\psi_{\tau}(Y_i-{Q}_{\tau,Y})(X_i-\mu_{X|Y}) -\textrm{qcov}_{\tau}\{Y,X\}]+o_p(1),
\end{split}
\end{align}
where $\mu_{X|Y}$ is defined in Subsection 2.2.
Since
\[
\sqrt{n}(\bar{X}-\mu_X)^2=\frac{1}{\sqrt{n}}\left[\frac{1}{\sqrt{n}}\sum_{i=1}^n(X_i-\mu_X)\right]^2 =O_p(n^{-1/2}),
\]
 we further have that
\begin{equation}\label{thm1_eq7}
\sqrt{n}(\widehat{\sigma}_X^2-\sigma_X^2)=\frac{1}{\sqrt{n}}\sum_{i=1}^n[(X_i-\mu_X)^2-\sigma_X^2]+o_p(1).
\end{equation}
Moreover,  \eqref{thm1_eq6}, \eqref{thm1_eq7}, the central limit theorem, and the Cramer-Wold device,
lead to
\[
\sqrt{n}\left(\begin{array}{c}
\widehat{\sigma}_X^2-\sigma_X^2 \\
n^{-1}\sum_{i=1}^n \psi_{\tau}(Y_i-\widehat{Q}_{\tau,Y})(X_i-\bar{X})- \textrm{qcov}_{\tau}\{Y,X\}
\end{array}\right) \rightarrow_d N(0,\Sigma),
\]
where
\[
\Sigma=\left(\begin{array}{cc} \Sigma_{11} &\Sigma_{13} \\ \Sigma_{13} &\Sigma_{12}
\end{array}\right),
\]
and $\Sigma_{11}$, $\Sigma_{12}$, and $\Sigma_{13}$ are defined  in Subsection 2.2.
Finally, following the Delta method \citep[Chapter 3]{vanderVaart1998}, we complete the proof.

\end{proof}


\begin{proof}[Proof of Theorem \ref{thm2}]
We first consider the term $\widehat{\sigma}_{X|\mathbf{Z}}^2$ in $\widehat{\textrm{qpcor}}_{\tau}\{Y,X|\mathbf{Z}\}$.
Let $\mathbf{Z}^*_i=(1,\mathbf{Z}_i^{\prime})^{\prime}$, $X_i^*=X_i-\alpha_1-\beta_1^{\prime}\mathbf{Z}_i$, $\theta_1=(\alpha_1,\beta_1^{\prime})^{\prime}$ and $\widehat{\theta}_1=(\widehat{\alpha}_1,\widehat{\beta}_1^{\prime})^{\prime}$, where $(\alpha_1,\beta_1^{\prime})$  and
 $(\widehat{\alpha}_1,\widehat{\beta}_1^{\prime})$ are defined  in Subsections 2.1 and  2.2, respectively. By the assumptions of this theorem, we have that $\theta_1=[E(\mathbf{Z}^*_i\mathbf{Z}_i^{*\prime})]^{-1}E(\mathbf{Z}^*_iX_i)$, $E(\mathbf{Z}^*_iX_i^*)=\mathbf{0}$, and
\[
\sqrt{n}(\widehat{\theta}_1-\theta_1) =\left(\frac{1}{n}\sum_{i=1}^n\mathbf{Z}^*_i\mathbf{Z}_i^{*\prime}\right)^{-1} \cdot \frac{1}{\sqrt{n}}\sum_{i=1}^n\mathbf{Z}^*_iX_i^*=O_p(1).
\]
According  the law of large numbers, we then have that
\begin{align}
\begin{split}\label{thm2_eq1}
\widehat{\sigma}_{X|\mathbf{Z}}^2 &=\frac{1}{n}\sum_{i=1}^n(X_i-\widehat{\theta}_1^{\prime}\mathbf{Z}^*_i)^2 \\ &=\frac{1}{n}\sum_{i=1}^n(X_i-\theta_1^{\prime}\mathbf{Z}^*_i)^2 +(\widehat{\theta}_1-\theta_1)^{\prime} \left(\frac{1}{n}\sum_{i=1}^n\mathbf{Z}^*_i\mathbf{Z}_i^{*\prime}\right)(\widehat{\theta}_1-\theta_1) \\ &\hspace{15mm}-2(\widehat{\theta}_1-\theta_1)^{\prime}\left(\frac{1}{n}\sum_{i=1}^n\mathbf{Z}^*_iX_i^*\right)\\
&=\frac{1}{n}\sum_{i=1}^n(X_i-\theta_1^{\prime}\mathbf{Z}^*_i)^2+o_p(n^{-1/2}).
\end{split}
\end{align}

We next consider the numerator in $\widehat{\textrm{qpcor}}_{\tau}\{Y,X|\mathbf{Z}\}$.
For the sake of simplicity, let
$Y_i^*=Y_i-\alpha_2-\beta_2^{\prime}\mathbf{Z}_i=Y_i-\theta_2\mathbf{Z}^*_i$,
$\theta_2=(\alpha_2,\beta_2^{\prime})^{\prime}$, and $\widehat{\theta}_2=(\widehat{\alpha}_2,\widehat{\beta}_2^{\prime})^{\prime}$, where
$Y_i^*$ is defined in the proof of Lemma 2, and
$(\alpha_2,\beta_2^{\prime})$  and
 $(\widehat{\alpha}_2,\widehat{\beta}_2^{\prime})$ are defined  in Subsections 2.1 and  2.2, respectively.
Under the theorem's assumptions, we  employ similar techniques to those used in  the proof of Lemma \ref{lem1} and given in \cite{Koenker2005} to show that there exists a unique $\theta_2$ such that  $E[\psi_{\tau}(Y_i^*)\mathbf{Z}^*_i]=\mathbf{0}$ and
\begin{equation}\label{thm2_eq2}
\sqrt{n}(\widehat{\theta}_2-\theta_2) =\{E[f_{Y_i|\mathbf{Z}_i}(\theta_2^{\prime}\mathbf{Z}^*_i)\mathbf{Z}^*_i\mathbf{Z}_i^{*\prime}]\}^{-1} \cdot \frac{1}{\sqrt{n}}\sum_{i=1}^n\psi_{\tau}(Y_i^*)\mathbf{Z}^*_i+o_p(1).
\end{equation}

Using a similar method to that for obtaining \eqref{thm1_eq1}, we have that
\begin{equation}\label{thm2_eq3}
\frac{1}{n}\sum_{i=1}^n\psi_{\tau}(Y_i-\widehat{\theta}_2\mathbf{Z}^*_i)X_i =\frac{1}{n}\sum_{i=1}^n\psi_{\tau}(Y_i^*)X_i +\frac{1}{n}\sum_{i=1}^ng_{\tau}(Y_i,\mathbf{Z}_i,\theta_2,\widehat{\theta}_2)X_i,
\end{equation}
where
\begin{align*}
g_{\tau}(Y_i,\mathbf{Z}_i,\theta_2,\widehat{\theta}_2)&= \psi_{\tau}(Y_i-\widehat{\theta}_2\mathbf{Z}^*_i)-\psi_{\tau}(Y_i^*) =-[I(Y_i<\widehat{\theta}_2\mathbf{Z}^*_i)-I(Y_i<\theta_2\mathbf{Z}^*_i)]\\
 &= I[(\widehat{\theta}_2-\theta_2)^{\prime}\mathbf{Z}^*_i<Y^*_i<0] -I[(\widehat{\theta}_2-\theta_2)^{\prime}\mathbf{Z}^*_i>Y^*_i>0].
\end{align*}
For any $\mathbf{v}\in R^{q+1}$, let
\begin{align*}
\xi_n(\mathbf{v})=\frac{1}{\sqrt{n}}\sum_{i=1}^n& [g_{\tau}(Y_i,\mathbf{Z}_i,\theta_2,\theta_2+n^{-1/2}\mathbf{v})X_i
+\int_{\theta_2^{\prime}\mathbf{Z}^*_i}^{\theta_2^{\prime}\mathbf{Z}^*_i +n^{-1/2}\mathbf{v}^{\prime}\mathbf{Z}^*_i} f_{Y_i|\mathbf{Z}_i,X_i}(y)dyX_i].
\end{align*}
Applying similar techniques to those for obtaining \eqref{thm1_eq3} and \eqref{thm1_eq4}, we can demonstrate that
\begin{align*}
E[\xi_n(\mathbf{v})]^2 &=E[g_{\tau}(Y_i,\mathbf{Z}_i,\theta_2,\theta_2+n^{-1/2}\mathbf{v})X_i]^2 \\
&\leq \{P(|Y_i^*|\leq n^{-1/2}|\mathbf{v}^{\prime}\mathbf{Z}^*_i|)\}^{1/2}\cdot (EX_i^4)^{1/2} =o(1)
\end{align*}
and, for any $\delta>0$ and $\mathbf{v}_1\in R^{p+1}$,
\begin{align*}
E \sup_{\|\mathbf{v}_1-\mathbf{v}\|\leq\delta}|\xi_n(\mathbf{v}_1)-\xi_n(\mathbf{v})|\leq \delta \cdot 2 E[\sup_{|y|\leq\pi}f_{Y_i|\mathbf{Z}_i,X_i}(\theta_2^{\prime}\mathbf{Z}^*_i+y)|X_i|].
\end{align*}
This implies that $E \sup_{\|\mathbf{v}\|\leq M}|\xi_n(\mathbf{v})|=o(1)$ for any $M>0$.
Note that, by \eqref{thm2_eq2}, $\sqrt{n}(\widehat{\theta}_2-\theta_2)=O_p(1)$. As a result,
\begin{align*}
\frac{1}{\sqrt{n}}\sum_{i=1}^n&g_{\tau}(Y_i,\mathbf{Z}_i,\theta_2,\widehat{\theta}_2)X_i \\ &=-\frac{1}{\sqrt{n}}\sum_{i=1}^n\int_{\theta_2^{\prime}\mathbf{Z}^*_i}^{\theta_2^{\prime}\mathbf{Z}^*_i +(\widehat{\theta}_2-\theta_2)^{\prime}\mathbf{Z}^*_i} f_{Y_i|\mathbf{Z}_i,X_i}(y)dyX_i  +o_p(1)\\
&=-(\widehat{\theta}_2-\theta_2)^{\prime}\cdot\frac{1}{\sqrt{n}}\sum_{i=1}^n f_{Y_i|\mathbf{Z}_i,X_i}(\theta_2^{\prime}\mathbf{Z}^*_i)X_i\mathbf{Z}^*_i+o_p(1)\\
&= -\Sigma_{21}^{\prime} \Sigma_{22}^{-1} \cdot \frac{1}{\sqrt{n}}\sum_{i=1}^n\psi_{\tau}(Y_i^*)\mathbf{Z}^*_i+o_p(1),
\end{align*}
where $\Sigma_{21}=E\{f_{Y_i|\mathbf{Z}_i,X_i}(\theta_2^{\prime}\mathbf{Z}^*_i)X_i\mathbf{Z}^*_i\}$ and $\Sigma_{22}=E[f_{Y_i|\mathbf{Z}_i}(\theta_2^{\prime}\mathbf{Z}^*_i)\mathbf{Z}^*_i\mathbf{Z}_i^{*\prime}]$ are defined  in Subsection 2.2.
This, together with  \eqref{thm2_eq3}, results in
\begin{equation}\label{thm2_eq5}
\frac{1}{n}\sum_{i=1}^n\psi_{\tau}(Y_i-\widehat{\theta}_2\mathbf{Z}^*_i)X_i =\frac{1}{n}\sum_{i=1}^n\psi_{\tau}(Y_i-\theta_2\mathbf{Z}^*_i)(X_i -\Sigma_{21}^{\prime} \Sigma_{22}^{-1}\mathbf{Z}^*_i)+o_p(n^{-1/2}).
\end{equation}
Subsequently,
by \eqref{thm2_eq1}, \eqref{thm2_eq5}, the central limit theorem, and the Cramer-Wold device, we obtain that
\[
\sqrt{n}\left(\begin{array}{c}
\widehat{\sigma}_{X|\mathbf{Z}}^2-{\sigma}_{X|\mathbf{Z}}^2 \\
n^{-1}\sum_{i=1}^n\psi_{\tau}(Y_i-\widehat{\theta}_2\mathbf{Z}^*_i)X_i- E[\psi_{\tau}(Y-\theta_2^{\prime}\mathbf{Z}^*)X]
\end{array}\right) \rightarrow_d N(0,\Sigma_2),
\]
where
\[
\Sigma_2=\left(\begin{array}{cc} \Sigma_{23} &\Sigma_{25} \\ \Sigma_{25} &\Sigma_{24}
\end{array}\right),
\]
and $\Sigma_{23}$, $\Sigma_{24}$, and $\Sigma_{25}$ are defined as in Subsection 2.2.
Finally, following the Delta method \citep[Chapter 3]{vanderVaart1998}, we complete the proof.

\end{proof}

\begin{proof}[Proof of Theorem \ref{thm3}]
We first consider the term $\widetilde{\sigma}_{y|\mathbf{z}}^2$ in $\widetilde{\phi}_{kk,\tau}$.
Let $\mathbf{z}^*_{t,k-1}=(1,\mathbf{z}_{t,k-1}^{\prime})^{\prime}$. Since $Ey_t^2<\infty$ and $E[y_t-E(y_t|\mathcal{F}_{t-1})]^2>0$, the matrix $E(\mathbf{z}^*_{t,k-1}\mathbf{z}_{t,k-1}^{*\prime})$ is finite and positive definite. Analogous to \eqref{thm2_eq1}, we can show that
\begin{align}
\begin{split}\label{thm3_eq1}
\widetilde{\sigma}_{y|\mathbf{z}}^2 &=\frac{1}{n}\sum_{t=k+1}^n(y_{t-k}-\alpha_1-\beta_1^{\prime}\mathbf{z}_{t,k-1})^2+o_p(n^{-1/2})\\
& =E(y_{t-k}-\alpha_1-\beta_1^{\prime}\mathbf{z}_{t,k-1})^2+o_p(1).
\end{split}
\end{align}

We next study the numerator of  $\widetilde{\phi}_{kk,\tau}$.
Let $\theta_2=(\phi_0(\tau),\phi_1(\tau),...,\phi_p(\tau),\mathbf{0}^{\prime})^{\prime}$,
and $\widetilde{\theta}_2=(\widetilde{\alpha}_2,\widetilde{\beta}_2^{\prime})^{\prime}$, where
$\mathbf{0}$ is the $(k-p)\times 1$ vector defined in the proof of Lemma 3, and
 $\widetilde{\alpha}_2$ and $\widetilde{\beta}_2$ are defined in Subsection 3.1. It is noteworthy that the series $\{y_t\}$
is fitted by model \eqref{section3_eq1} with order $k-1$ and the true parameter vector $\theta_2$.
Accordingly, $e_{t,\tau}=y_t-\theta_2^{\prime}\mathbf{z}^*_{t,k-1}$ and the parameter estimate of $\theta_2$ is
 $\widetilde{\theta}_2$. Then, using \eqref{thm4_eq1} in the proof of Theorem \ref{thm4}, we obtain that
\[
\sqrt{n}(\widetilde{\theta}_2-\theta_2) =\{E[f_{t-1}(0)\mathbf{z}^*_{t,k-1}\mathbf{z}_{t,k-1}^{*\prime}]\}^{-1} \cdot\frac{1}{n}\sum_{t=k+1}^n\psi_{\tau}(e_{t,\tau}) \mathbf{z}^*_{t,k-1}+o_p(n^{-1/2}).
\]
Applying a similar approach to that used in obtaining \eqref{thm1_eq8}, and then using the above result,
we further have that
\begin{align}
\begin{split}\label{thm3_eq2}
\frac{1}{n}\sum_{t=k+1}^n& [\psi_{\tau}(y_t-\widetilde{\theta}_2^{\prime}\mathbf{z}^*_{t-k})- \psi_{\tau}(e_{t,\tau})] y_{t-k}\\
&=-\frac{1}{n}\sum_{t=k+1}^n \int_0^{(\widetilde{\theta}_2-\theta_2)^{\prime}\mathbf{z}^*_{t,k-1}} f_{t-1}(s)dsy_{t-k}+o_p(n^{-1/2})\\
&=-(\widetilde{\theta}_2-\theta_2)^{\prime} \cdot\frac{1}{n}\sum_{t=k+1}^n f_{t-1}(0)y_{t-k}\mathbf{z}^*_{t,k-1}+o_p(n^{-1/2})\\
&=-A_1^{\prime}\Sigma_{31}^{-1} \cdot\frac{1}{n}\sum_{t=k+1}^n\psi_{\tau}(e_{t,\tau}) \mathbf{z}^*_{t,k-1}+o_p(n^{-1/2}),
\end{split}
\end{align}
where $A_1$ and $\Sigma_{31}$ are defined as in Subsection 3.1.
Subsequently, using similar techniques to those for obtaining \eqref{thm1_eq1} and the result from
equation (\ref{thm3_eq2}),
we obtain that
\begin{align}
\begin{split}\label{thm3_eq4}
\frac{1}{n}\sum_{t=k+1}^n &\psi_{\tau}(y_t-\widetilde{\alpha}_2-\widetilde{\beta}_2^{\prime}\mathbf{z}_{t,k-1}) y_{t-k}\\
&=\frac{1}{n}\sum_{t=k+1}^n \psi_{\tau}(e_{t,\tau}) y_{t-k}+\frac{1}{n}\sum_{t=k+1}^n [\psi_{\tau}(y_t-\widetilde{\theta}_2^{\prime}\mathbf{z}^*_{t,k-1})- \psi_{\tau}(e_{t,\tau})] y_{t-k}\\
&=\frac{1}{n}\sum_{t=k+1}^n \psi_{\tau}(e_{t,\tau}) [y_{t-k}-A_1^{\prime}\Sigma_{31}^{-1}\mathbf{z}^*_{t,k-1}]+o_p(n^{-1/2}).
\end{split}
\end{align}
Equations \eqref{thm3_eq1} and (\ref{thm3_eq4}), together with  the central limit theorem for the martingale difference sequence, complete the proof of the asymptotic normality of $\widetilde{\phi}_{kk,\tau}$. From Lemma \ref{lem3}, we also have that ${\phi}_{kk,\tau}=0$.

\end{proof}

\begin{proof}[Proof of Theorem \ref{thm4}]
For any $\mathbf{v}\in R^{p+1}$, denote
\begin{align*}
Q(\mathbf{v})& =\sum_{t=p+1}^n\rho_{\tau}(y_t-(\bm{\phi}(\tau)+n^{-1/2}\mathbf{v})^{\prime}\mathbf{z}^*_{t,p}) -\sum_{t=p+1}^n\rho_{\tau}(y_t-\bm{\phi}^{\prime}(\tau)\mathbf{z}^*_{t,p})\\
&=\sum_{t=p+1}^n\rho_{\tau}(e_{t,\tau}^*-n^{-1/2}\mathbf{v}^{\prime}\mathbf{z}^*_{t,p}) -\sum_{t=p+1}^n\rho_{\tau}(e_{t,\tau}^*),
\end{align*}
where $e_{t,\tau}^*=y_t-\bm{\phi}^{\prime}(\tau)\mathbf{z}^*_{t,p}$.
Applying \eqref{proof_lem1_eq1} and  techniques similar to those in the proof of Theorem 3.1 in \cite{Koenker_Xiao2006}, we can show that
\begin{align*}
Q(\mathbf{v})& = -\mathbf{v}^{\prime}\cdot\frac{1}{\sqrt{n}}\sum_{t=p+1}^n\psi_{\tau}(e_{t,\tau}^*)\mathbf{z}^*_{t,p} +\sum_{t=p+1}^n\int_0^{n^{-1/2}\mathbf{v}^{\prime}\mathbf{z}^*_{t,p}}I(e_{t,\tau}^*\leq s)-I(e_{t,\tau}^*<0)ds\\
&= -\mathbf{v}^{\prime}\cdot\frac{1}{\sqrt{n}}\sum_{t=p+1}^n\psi_{\tau}(e_{t,\tau}^*)\mathbf{z}^*_{t,p} +\frac{1}{2}\mathbf{v}^{\prime}E[f_{t-1}(0)\mathbf{z}^*_{t,p}\mathbf{z}_{t,p}^{*\prime}]\mathbf{v}+o_p(1).
\end{align*}
Note that $Q(\mathbf{v})$ is a convex function with respect to $\mathbf{v}$. By \cite{Knight1998}, we then have the Bahadur representation as follows,
\begin{equation}\label{thm4_eq1}
\sqrt{n}\{\widetilde{\bm{\phi}}(\tau)-\bm{\phi}(\tau)\}= \{E[f_{t-1}(0)\mathbf{z}^*_{t,p}\mathbf{z}_{t,p}^{*\prime}]\}^{-1}\cdot \frac{1}{\sqrt{n}}\sum_{t=p+1}^n\psi_{\tau}(e_{t,\tau}^*)\mathbf{z}^*_{t,p}+o_p(1).
\end{equation}
This, in conjunction with the central limit theorem and the Cramer-Wold device, completes
the proof.
\end{proof}

\begin{proof}[Proof of Theorem \ref{thm5}]
Without loss of generality, we assume that ${\mathbf{z}}_{1,p}$ is observable. Then
\[
\widetilde{e}_{t,\tau}=y_t-\widetilde{\bm{\phi}}^{\prime}(\tau)\mathbf{z}^*_{t,p}=y_t-{\bm{\phi}}^{\prime}(\tau)\mathbf{z}^*_{t,p} -(\widetilde{\bm{\phi}}(\tau)-{\bm{\phi}}(\tau))^{\prime}\mathbf{z}^*_{t,p} =e_{t,\tau}-(\widetilde{\bm{\phi}}(\tau)-{\bm{\phi}}(\tau))^{\prime}\mathbf{z}^*_{t,p}
\]
for $1\leq t\leq n$.
We first consider the term $\widetilde{\sigma}_e^2$ in  $r_{k,\tau}$.
By the ergodic theorem and the fact that $\widetilde{\bm{\phi}}(\tau)-{\bm{\phi}}(\tau)=O_p(n^{-1/2})$, we can show that
\[
\widetilde{\mu}_e=\frac{1}{n}\sum_{t=k+1}^n\widetilde{e}_{t,\tau} =\frac{1}{n}\sum_{t=k+1}^ne_{t,\tau} -(\widetilde{\bm{\phi}}(\tau)-{\bm{\phi}}(\tau))^{\prime}\frac{1}{n}\sum_{t=k+1}^n\mathbf{z}^*_{t,p} =E(e_{t,\tau})+o_p(1),
\]
and
\begin{align}
\begin{split}\label{thm5_eq1}
\widetilde{\sigma}_e^2 &=\frac{1}{n}\sum_{t=k+1}^n(\widetilde{e}_{t,\tau}-\widetilde{\mu}_e)^2 =\frac{1}{n}\sum_{t=k+1}^n\widetilde{e}_{t,\tau}^2-\widetilde{\mu}_e^2 \\
&=\frac{1}{n}\sum_{t=k+1}^ne_{t,\tau}^2 -2(\widetilde{\bm{\phi}}(\tau)-{\bm{\phi}}(\tau))^{\prime} \cdot\frac{1}{n}\sum_{t=k+1}^ne_{t,\tau}\mathbf{z}^*_{t,p}\\
&\hspace{10mm}+(\widetilde{\bm{\phi}}(\tau)-{\bm{\phi}}(\tau))^{\prime} \cdot\frac{1}{n}\sum_{t=k+1}^n\mathbf{z}^*_{t,p} \mathbf{z}_{t,p}^{*\prime} \cdot(\widetilde{\bm{\phi}}(\tau)-{\bm{\phi}}(\tau)) -\widetilde{\mu}_e^2\\
&=\sigma_e^2+o_p(1),
\end{split}
\end{align}
where $\sigma_e^2$ is defined in Subsection 3.2.

We next consider the numerator of $r_{k,\tau}$.
Using the fact that $|\sum_{t=k+1}^n \psi_{\tau}(\widetilde{e}_{t,\tau})|<1$, we obtain
\begin{align}
\begin{split}\label{thm5_eq2}
\frac{1}{n}\sum_{t=k+1}^n &\psi_{\tau}(\widetilde{e}_{t,\tau})(\widetilde{e}_{t-k,\tau}-\widetilde{\mu}_e) \\
&=\frac{1}{n}\sum_{t=k+1}^n \psi_{\tau}(y_t-\widetilde{\bm{\phi}}^{\prime}(\tau)\mathbf{z}^*_{t,p}) [e_{t-k,\tau}-(\widetilde{\bm{\phi}}(\tau)-{\bm{\phi}}(\tau))^{\prime}\mathbf{z}^*_{t-k,p}] +O_p(n^{-1})\\
&=\frac{1}{n}\sum_{t=k+1}^n \psi_{\tau}(y_t-\widetilde{\bm{\phi}}^{\prime}(\tau)\mathbf{z}^*_{t,p})e_{t-k,\tau} \\ &\hspace{5mm}-(\widetilde{\bm{\phi}}(\tau)-{\bm{\phi}}(\tau))^{\prime} \cdot\frac{1}{n}\sum_{t=k+1}^n \psi_{\tau}(y_t-\widetilde{\bm{\phi}}^{\prime}(\tau)\mathbf{z}^*_{t,p})\mathbf{z}^*_{t-k,p} +o_p(n^{-1/2}).
\end{split}
\end{align}
Applying similar techniques to those used in obtaining \eqref{thm1_eq8}, we are able to show that
\begin{align*}
\frac{1}{n}&\sum_{t=k+1}^n [\psi_{\tau}(y_t-\widetilde{\bm{\phi}}^{\prime}(\tau)\mathbf{z}^*_{t,p})- \psi_{\tau}(e_{t,\tau})]e_{t-k,\tau} \\
&=-\Sigma_{51,k}\Sigma_{41}^{-1}\cdot\frac{1}{n}\sum_{t=k+1}^n\psi_{\tau}(e_{t,\tau})\mathbf{z}^*_{t,p}+o_p(n^{-1/2}),
\end{align*}
where $\Sigma_{41}$ is defined in Subsection 3.1 and $\Sigma_{51,k}=E[f_{t-1}(0)e_{t-k,\tau}\mathbf{z}_{t,p}^{*\prime}]$.
In addition, using similar techniques to those in obtaining  \eqref{thm1_eq1} and the above result, we further  obtain that
\begin{align*}
\frac{1}{n}\sum_{t=k+1}^n \psi_{\tau}(y_t-\widetilde{\bm{\phi}}^{\prime}(\tau)\mathbf{z}^*_{t,p})e_{t-k,\tau} =\frac{1}{n}\sum_{t=k+1}^n\psi_{\tau}(e_{t,\tau})[e_{t-k,\tau} -\Sigma_{51,k}\Sigma_{41}^{-1}\mathbf{z}^*_{t,p}]+o_p(n^{-1/2}).
\end{align*}
 Analogously, we can verify that
\[
\frac{1}{n}\sum_{t=k+1}^n \psi_{\tau}(y_t-\widetilde{\bm{\phi}}^{\prime}(\tau)\mathbf{z}^*_{t,p})\mathbf{z}^*_{t-k,p} =O_p(n^{-1/2}).
\]
The above results, together with \eqref{thm5_eq1}, (\ref{thm5_eq2}), and the fact that $\widetilde{\bm{\phi}}(\tau)-{\bm{\phi}}(\tau)=O_p(n^{-1/2})$, imply
\[
r_{k,\tau}=\frac{1}{\sqrt{(\tau-\tau^2)\sigma_e^2}}\cdot \frac{1}{n}\sum_{t=k+1}^n\psi_{\tau}(e_{t,\tau})[e_{t-k,\tau} -\Sigma_{51,k}\Sigma_{41}^{-1}\mathbf{z}^*_{t,p}]+o_p(n^{-1/2}),
\]
and
\[
R_{\tau}=\frac{1}{\sqrt{(\tau-\tau^2)\sigma_e^2}}\cdot \frac{1}{n}\sum_{t=k+1}^n\psi_{\tau}(e_{t,\tau})[\mathbf{e}_{t-1,K} -\Sigma_{51}\Sigma_{41}^{-1}\mathbf{z}^*_{t,p}]+o_p(n^{-1/2}),
\]
where $\mathbf{e}_{t-1,K}$ and $\Sigma_{51}$ are defined in Subsection 3.2.
Subsequently, applying the central limit theorem for the martingale difference sequence and the Cramer-Wold device, we complete the proof.

\end{proof}

\newpage
\renewcommand{\baselinestretch}{1.3}

\begin{table}
\begin{center}
\caption{\label{table1} Bias (BIAS), estimated standard deviation (ESD), and asymptotic standard deviation
(ASD) of the sample quantile correlation $\widehat{\textrm{qcor}}_{\tau}\{Y,X\}$.}
\begin{tabular}{ccrccccc}
\\
\hline
$n$ & $\tau$ & BIAS & ESD & \multicolumn{4}{c}{ASD}\\ \cline{5-8}
    &        &      &              &$h_{HS}$ & $h_{B}$ & $3h_{HS}$ & $0.6h_{B}$\\
\hline
 50& 0.25 & 0.0062 & 0.1216 & 0.1274 & 0.1268 & 0.1199 & 0.1337 \\
   & 0.50 &-0.0055 & 0.1180 & 0.1216 & 0.1210 & 0.1167 & 0.1253\\
   & 0.75 & 0.0023 & 0.1221 & 0.1286 & 0.1280 & 0.1199 & 0.1364\\
100& 0.25 &-0.0045 & 0.0828 & 0.0867 & 0.0860 & 0.0836 & 0.0891\\
   & 0.50 &-0.0032 & 0.0792 & 0.0835 & 0.0829 & 0.0816 & 0.0847 \\
   & 0.75 &-0.0029 & 0.0824 & 0.0873 & 0.0863 & 0.0836 & 0.0897\\
200& 0.25 &-0.0012 & 0.0598 & 0.0601 & 0.0596 & 0.0586 & 0.0607\\
   & 0.50 &-0.0000 & 0.0585 & 0.0580 & 0.0577 & 0.0571 & 0.0582\\
   & 0.75 & 0.0005 & 0.0562 & 0.0601 & 0.0596 & 0.0586 & 0.0608 \\
      \hline
\end{tabular}
\end{center}
\end{table}

\begin{table}
\begin{center}
\caption{\label{table2} Bias (BIAS), estimated standard deviation (ESD), and asymptotic standard deviation (ASD)
of the sample quantile partial correlation $\widehat{\textrm{qpcor}}_{\tau}\{Y,X|Z\}$.}
\begin{tabular}{ccrccccc}
\\
\hline
$n$ & $\tau$ & BIAS & ESD & \multicolumn{4}{c}{ASD}\\ \cline{5-8}
    &        &      &              &$h_{HS}$ & $h_{B}$ & $3h_{HS}$ & $0.6h_{B}$\\
\hline
 50& 0.25 &-0.0135 & 0.1330 & 0.1383 & 0.1379 & 0.1281 & 0.1492\\
   & 0.50 &-0.0054 & 0.1365 & 0.1350 & 0.1337 & 0.1282 & 0.1398\\
   & 0.75 & 0.0029 & 0.1378 & 0.1407 & 0.1401 & 0.1299 & 0.1487\\
100& 0.25 &-0.0094 & 0.0901 & 0.0972 & 0.0963 & 0.0922 & 0.1004\\
   & 0.50 &-0.0026 & 0.0931 & 0.0943 & 0.0935 & 0.0912 & 0.0959\\
   & 0.75 & 0.0046 & 0.0971 & 0.0974 & 0.0961 & 0.0921 & 0.1008  \\
200& 0.25 &-0.0052 & 0.0663 & 0.0677 & 0.0669 & 0.0651 & 0.0688\\
   & 0.50 &-0.0004 & 0.0654 & 0.0660 & 0.0654 & 0.0646 & 0.0664\\
   & 0.75 & 0.0022 & 0.0655 & 0.0677 & 0.0669 & 0.0654 & 0.0689\\
\hline
\end{tabular}
\end{center}
\end{table}

\begin{table}
\begin{center}
\caption{\label{table3} Bias (BIAS), estimated standard deviation (ESD), and asymptotic standard deviation (ASD)
of the sample QPACF  of the observed time series, $\widetilde{\phi}_{kk,\tau}$,
at lags $k=2$, 4, and 6.}
\begin{tabular}{ccrccccc}
\\
\hline
$n$ & $\tau$ & BIAS & ESD & \multicolumn{4}{c}{ASD}\\ \cline{5-8}
    &        &      &              &$h_{HS}$ & $h_{B}$ & $3h_{HS}$ & $0.6h_{B}$\\
\hline
&&\multicolumn{6}{c}{$k=2$}\\
 50& 0.25 &-0.0280& 0.1417& 0.1419& 0.1419& 0.1416& 0.1422\\
   & 0.50 &-0.0343& 0.1439& 0.1416& 0.1416& 0.1416& 0.1417\\
   & 0.75 &-0.0316& 0.1485& 0.1418& 0.1419& 0.1416& 0.1421\\
100& 0.25 &-0.0163& 0.1021& 0.1001& 0.1001& 0.1001& 0.1002\\
   & 0.50 &-0.0168& 0.1042& 0.1001& 0.1001& 0.1001& 0.1001\\
   & 0.75 &-0.0102& 0.1009& 0.1001& 0.1001& 0.1001& 0.1002\\
200& 0.25 &-0.0107& 0.0732& 0.0707& 0.0707& 0.0707& 0.0707\\
   & 0.50 &-0.0092& 0.0711& 0.0707& 0.0707& 0.0707& 0.0707\\
   & 0.75 &-0.0077& 0.0728& 0.0707& 0.0707& 0.0707& 0.0707\\
&&\multicolumn{6}{c}{$k=4$}\\
 50& 0.25 &-0.0344& 0.1434& 0.1438& 0.1439& 0.1421& 0.1457\\
   & 0.50 &-0.0340& 0.1471& 0.1427& 0.1424& 0.1427& 0.1437\\
   & 0.75 &-0.0317& 0.1497& 0.1438& 0.1439& 0.1421& 0.1456\\
100& 0.25 &-0.0143& 0.1032& 0.1007& 0.1005& 0.1002& 0.1013\\
   & 0.50 &-0.0172& 0.1013& 0.1003& 0.1002& 0.1002& 0.1005\\
   & 0.75 &-0.0196& 0.1038& 0.1007& 0.1005& 0.1002& 0.1011\\
200& 0.25 &-0.0042& 0.0709& 0.0709& 0.0708& 0.0708& 0.0710\\
   & 0.50 &-0.0066& 0.0720& 0.0708& 0.0708& 0.0708& 0.0708\\
   & 0.75 &-0.0072& 0.0703& 0.0709& 0.0708& 0.0708& 0.0710\\
&&\multicolumn{6}{c}{$k=6$}\\
 50& 0.25 &-0.0278& 0.1486& 0.1489& 0.1489& 0.1483&0.1496\\
   & 0.50 &-0.0356& 0.1500& 0.1452& 0.1450& 0.1452& 0.1463\\
   & 0.75 &-0.0296& 0.1588& 0.1531& 0.1533& 0.1523& 0.1557\\
100& 0.25 &-0.0124& 0.1052& 0.1018& 0.1015& 0.1004& 0.1030\\
   & 0.50 &-0.0197& 0.1049& 0.1006& 0.1004& 0.1005& 0.1014\\
   & 0.75 &-0.0189& 0.1073& 0.1017& 0.1012& 0.1004& 0.1030\\
200& 0.25 &-0.0103& 0.0741& 0.0712& 0.0710& 0.0708& 0.0716\\
   & 0.50 &-0.0112& 0.0736& 0.0709& 0.0708& 0.0708& 0.0710\\
   & 0.75 &-0.0105& 0.0727& 0.0712& 0.0710& 0.0708& 0.0715\\
\hline
\end{tabular}
\end{center}
\end{table}

\begin{table}
\begin{center}
\caption{\label{table4} Bias (BIAS), estimated standard deviation (ESD), and asymptotic standard deviation
(ASD) of parameter estimates $\widetilde{\phi}_0(\tau)$ and $\widetilde{\phi}_1(\tau)$.}
\begin{tabular}{cccrccccc}
\\
\hline
$n$ & $\tau$ & Coefficients & BIAS & ESD & \multicolumn{4}{c}{ASD}\\ \cline{6-9}
    &        &&      &              &$h_{HS}$ & $h_{B}$ & $3h_{HS}$ & $0.6h_{B}$\\
\hline
 50& 0.25 & $\widetilde{\phi}_0(\tau)$ & 0.0065 & 0.2114 & 0.2285 & 0.2338 & 0.1846 & 0.2086\\
   &      & $\widetilde{\phi}_1(\tau)$ &-0.0487 & 0.1733 & 0.1907 & 0.1943 & 0.1601 & 0.1670\\
   & 0.50 & $\widetilde{\phi}_0(\tau)$ &-0.0129 & 0.1923 & 0.1957 & 0.2016 & 0.1942 & 0.1879\\
   &      & $\widetilde{\phi}_1(\tau)$ &-0.0471 & 0.1653 & 0.1684 & 0.1736 & 0.1664 & 0.1583\\
   & 0.75 & $\widetilde{\phi}_0(\tau)$ &-0.0229 & 0.2141 & 0.2287 & 0.2335 & 0.1845 & 0.2086\\
   &      & $\widetilde{\phi}_1(\tau)$ &-0.0495 & 0.1790 & 0.1899 & 0.1938 & 0.1590 & 0.1693\\
100& 0.25 & $\widetilde{\phi}_0(\tau)$ & 0.0016 & 0.1396 & 0.1513 & 0.1553 & 0.1282 & 0.1457\\
   &      & $\widetilde{\phi}_1(\tau)$ &-0.0187 & 0.1208 & 0.1264 & 0.1302 & 0.1102 & 0.1189\\
   & 0.50 & $\widetilde{\phi}_0(\tau)$ &-0.0039 & 0.1290 & 0.1329 & 0.1370 & 0.1363 & 0.1302\\
   &      & $\widetilde{\phi}_1(\tau)$ &-0.0186 & 0.1151 & 0.1133 & 0.1176 & 0.1169 & 0.1092\\
   & 0.75 & $\widetilde{\phi}_0(\tau)$ &-0.0118 & 0.1445 & 0.1488 & 0.1534 & 0.1269 & 0.1434\\
   &      & $\widetilde{\phi}_1(\tau)$ &-0.0197 & 0.1227 & 0.1241 & 0.1289 & 0.1091 & 0.1164\\
200& 0.25 & $\widetilde{\phi}_0(\tau)$ &-0.0010 & 0.0970 & 0.1019 & 0.1052 & 0.0911 & 0.1001\\
   &      & $\widetilde{\phi}_1(\tau)$ &-0.0068 & 0.0839 & 0.0860 & 0.0895 & 0.0785 & 0.0831\\
   & 0.50 & $\widetilde{\phi}_0(\tau)$ &-0.0046 & 0.0916 & 0.0916 & 0.0943 & 0.1049 & 0.0907\\
   &      & $\widetilde{\phi}_1(\tau)$ &-0.0110 & 0.0765 & 0.0782 & 0.0813 & 0.1082 & 0.0769\\
   & 0.75 & $\widetilde{\phi}_0(\tau)$ &-0.0099 & 0.0999 & 0.1021 & 0.1055 & 0.0912 & 0.1002\\
   &      & $\widetilde{\phi}_1(\tau)$ &-0.0104 & 0.0840 & 0.0860 & 0.0897 & 0.0787 & 0.0833\\
      \hline
\end{tabular}
\end{center}
\end{table}

\begin{table}
\begin{center}
\caption{\label{table5} Bias (BIAS), estimated standard deviation (ESD) and asymptotic standard deviation (ASD)
of the sample QACF of residuals, $r_{k,\tau}$, at $k=1$, 3, and 5.}
\begin{tabular}{ccrccccc}
\\
\hline
$n$ & QACF  & BIAS & ESD & \multicolumn{4}{c}{ASD}\\ \cline{5-8}
    &     &      &              &$h_{HS}$ & $h_{B}$ & $3h_{HS}$ & $0.6h_{B}$\\
\hline
   &   &\multicolumn{6}{c}{$\tau=0.25$}\\
 50& $r_{1,\tau}$ & 0.0195 & 0.0849 & 0.0704 & 0.0705 & 0.0699 & 0.0706\\
   & $r_{3,\tau}$ &-0.0011 & 0.1485 & 0.1478 & 0.1479 & 0.1476 & 0.1482\\
   & $r_{5,\tau}$ & 0.0002 & 0.1478 & 0.1497 & 0.1496 & 0.1495 & 0.1499\\
100& $r_{1,\tau}$ & 0.0075 & 0.0577 & 0.0490 & 0.0491 & 0.0489 & 0.0492\\
   & $r_{3,\tau}$ &-0.0042 & 0.1009 & 0.1008 & 0.1008 & 0.1007 & 0.1009\\
   & $r_{5,\tau}$ &-0.0025 & 0.1041 & 0.1027 & 0.1026 & 0.1027 & 0.1029\\
200& $r_{1,\tau}$ & 0.0035 & 0.0370 & 0.0351 & 0.0352 & 0.0351 & 0.0352\\
   & $r_{3,\tau}$ & 0.0008 & 0.0705 & 0.0701 & 0.0702 & 0.0701 & 0.0702\\
   & $r_{5,\tau}$ & 0.0005 & 0.0716 & 0.0716 & 0.0715 & 0.0716 & 0.0716\\
   &   &\multicolumn{6}{c}{$\tau=0.50$}\\
 50& $r_{1,\tau}$ & 0.0156 & 0.0779 & 0.0699 & 0.0700 & 0.0699 & 0.0704\\
   & $r_{3,\tau}$ &-0.0199 & 0.1409 & 0.1473 & 0.1473 & 0.1473 & 0.1475\\
   & $r_{5,\tau}$ &-0.0234 & 0.1463 & 0.1492 & 0.1491 & 0.1492 & 0.1493\\
100& $r_{1,\tau}$ & 0.0094 & 0.0553 & 0.0493 & 0.0493 & 0.0493 & 0.0494\\
   & $r_{3,\tau}$ &-0.0089 & 0.0998 & 0.1008 & 0.1007 & 0.1007 & 0.1008\\
   & $r_{5,\tau}$ &-0.0108 & 0.1022 & 0.1025 & 0.1025 & 0.1025 & 0.1026\\
200& $r_{1,\tau}$ & 0.0021 & 0.0366 & 0.0351 & 0.0351 & 0.0351 & 0.0351\\
   & $r_{3,\tau}$ &-0.0036 & 0.0693 & 0.0701 & 0.0701 & 0.0701 & 0.0701\\
   & $r_{5,\tau}$ &-0.0049 & 0.0703 & 0.0715 & 0.0715 & 0.0715 & 0.0716\\
   &   &\multicolumn{6}{c}{$\tau=0.75$}\\
 50& $r_{1,\tau}$ & 0.0112 & 0.0918 & 0.0704 & 0.0702 & 0.0698 & 0.0708\\
   & $r_{3,\tau}$ &-0.0355 & 0.1461 & 0.1477 & 0.1477 & 0.1474 & 0.1482\\
   & $r_{5,\tau}$ &-0.0390 & 0.1558 & 0.1495 & 0.1497 & 0.1493 & 0.1500\\
100& $r_{1,\tau}$ & 0.0091 & 0.0577 & 0.0495 & 0.0495 & 0.0494 & 0.0496\\
   & $r_{3,\tau}$ &-0.0181 & 0.1027 & 0.1008 & 0.1008 & 0.1007 & 0.1009\\
   & $r_{5,\tau}$ &-0.0199 & 0.1022 & 0.1026 & 0.1026 & 0.1025 & 0.1027\\
200& $r_{1,\tau}$ & 0.0018 & 0.0363 & 0.0351 & 0.0351 & 0.0351 & 0.0351\\
   & $r_{3,\tau}$ &-0.0082 & 0.0705 & 0.0701 & 0.0701 & 0.0701 & 0.0702\\
   & $r_{5,\tau}$ &-0.0094 & 0.0724 & 0.0716 & 0.0716 & 0.0716 & 0.0716\\
      \hline
\end{tabular}
\end{center}
\end{table}

\begin{table}
\begin{center}
\caption{\label{table6} Rejection rate of the test statistic $Q_{BP}(K)$ with $K=6$
and the 5\% nominal significance level.}
\begin{tabular}{cccccccc}
\\
\hline
&&\multicolumn{3}{c}{$\tau$}\\
\cline{3-5}
$n$ & $\phi$ & 0.25&0.50&0.75\\
\hline
 50   & 0.0 &0.052&0.046&0.056\\
      & 0.2 &0.078&0.067&0.082\\
      & 0.4 &0.221&0.249&0.231\\
 100  & 0.0 &0.041&0.056&0.052\\
      & 0.2 &0.129&0.146&0.126\\
      & 0.4 &0.532&0.602&0.514\\
 200  & 0.0 &0.048&0.051&0.051\\
      & 0.2 &0.283&0.325&0.257\\
      & 0.4 &0.891&0.952&0.886\\
\hline
\end{tabular}
\end{center}
\end{table}

\begin{figure}
\centering \scalebox{0.7}[0.7]{
\includegraphics{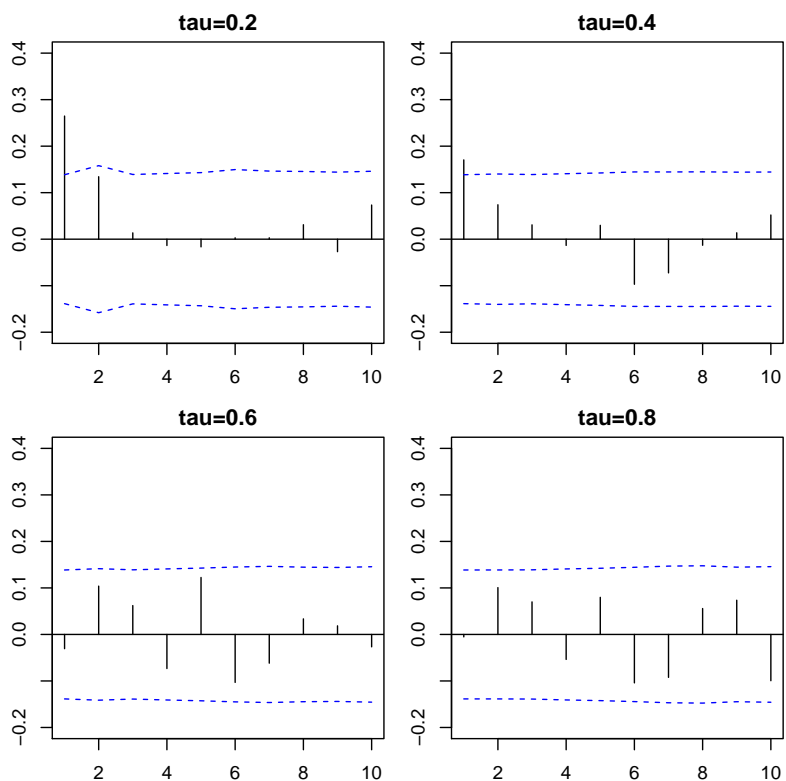}}
\caption{\label{fig1} The sample QPACF of the observed time series, $\widetilde{\phi}_{kk,\tau}$, with $\tau=0.2$, 0.4, 0.6, and 0.8. The dashed lines correspond to
$\pm 1.96\sqrt{\widehat{\Omega}_3/n}$.}
\end{figure}

\begin{figure}
\centering \scalebox{0.7}[0.7]{
\includegraphics{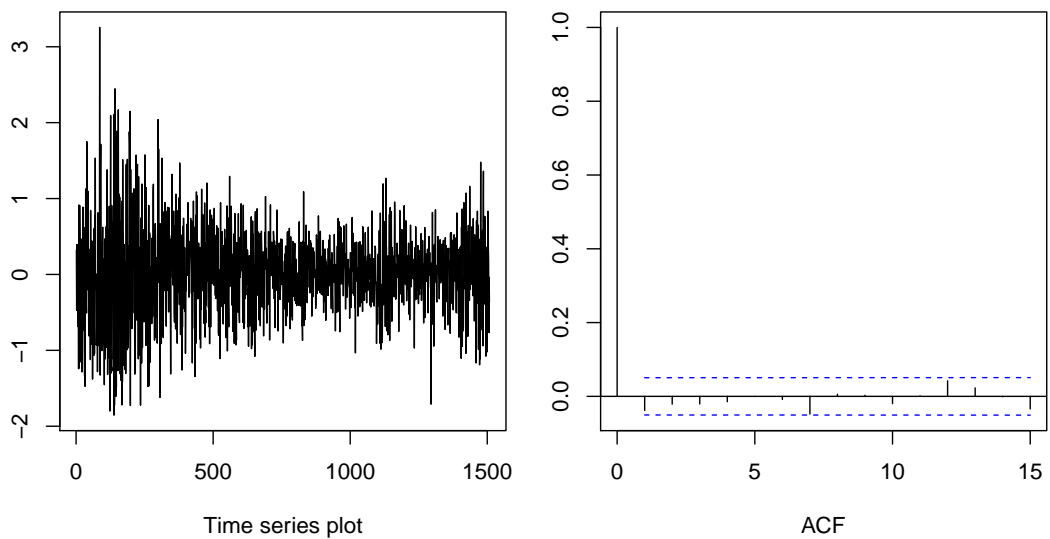}}
\caption{\label{nasdaq} The time series plot and the sample ACF of the log return (as a percentage) of the daily closing price on the Nasdaq Composite from January 1, 2002 to December 31, 2007.}
\end{figure}

\begin{figure}
\centering 
\includegraphics[scale=0.6]{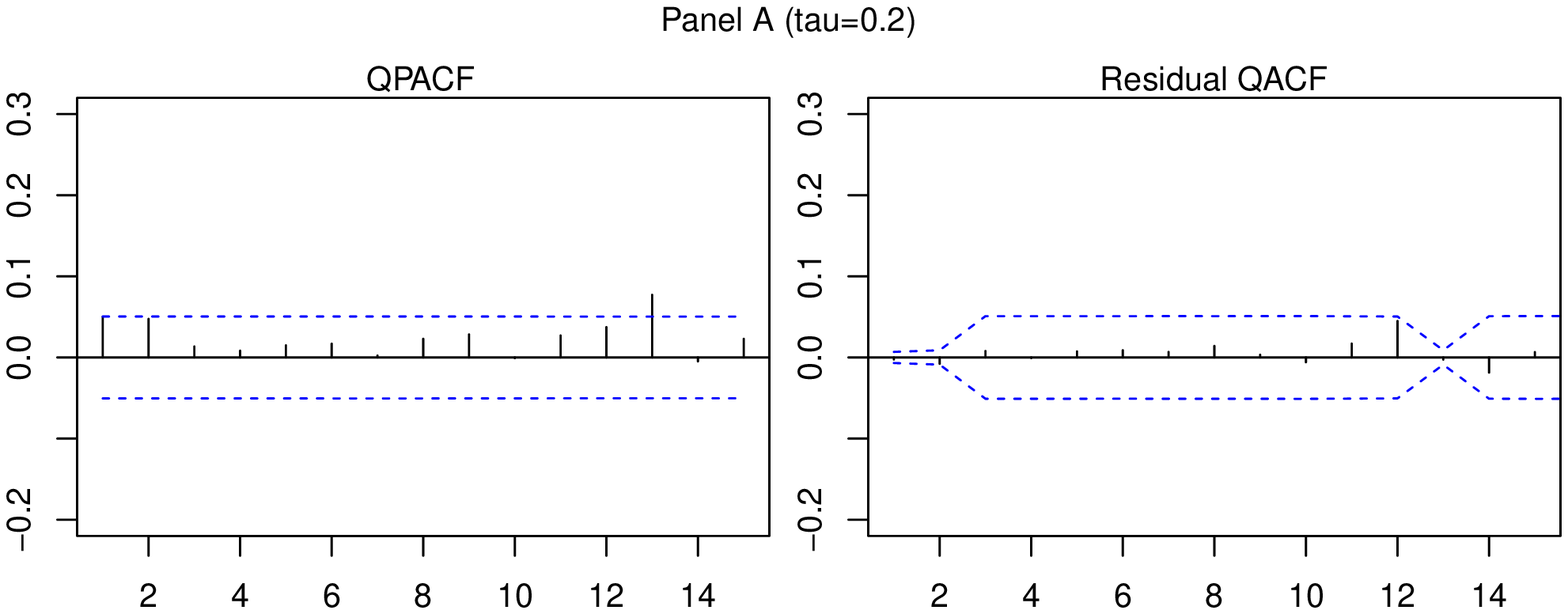}
\includegraphics[scale=0.6]{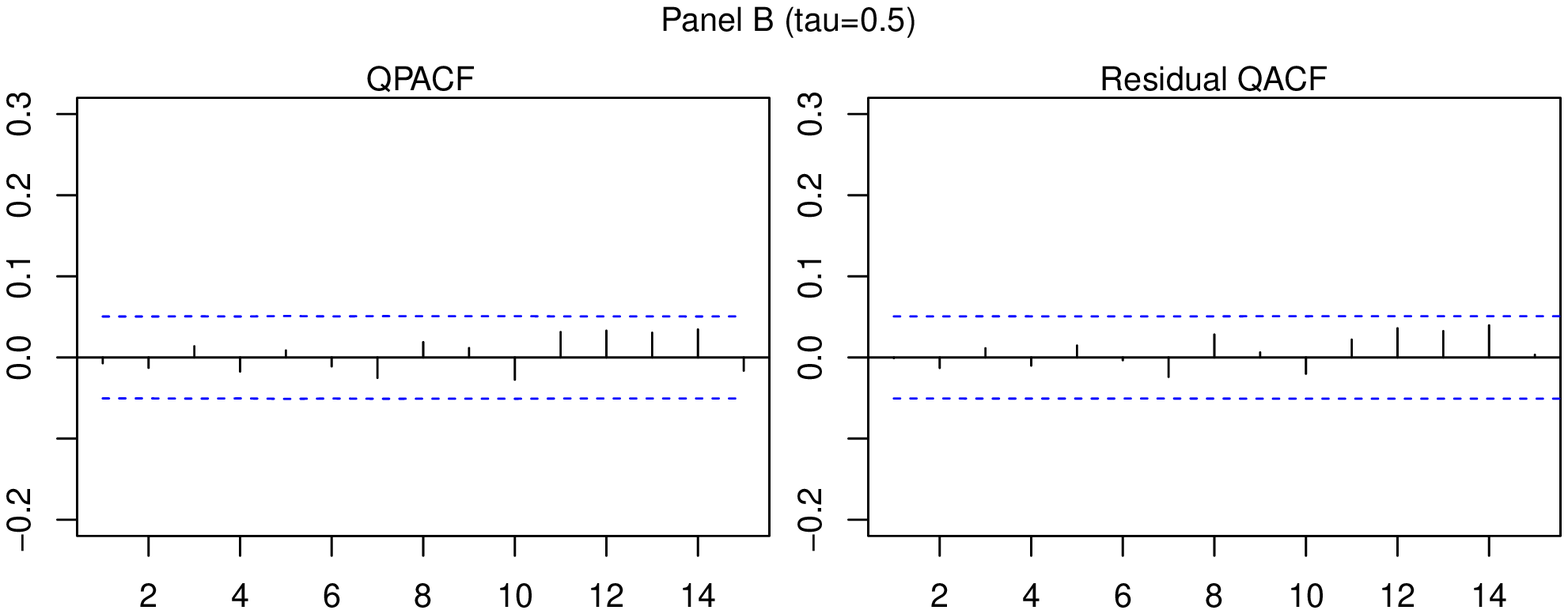}
\includegraphics[scale=0.6]{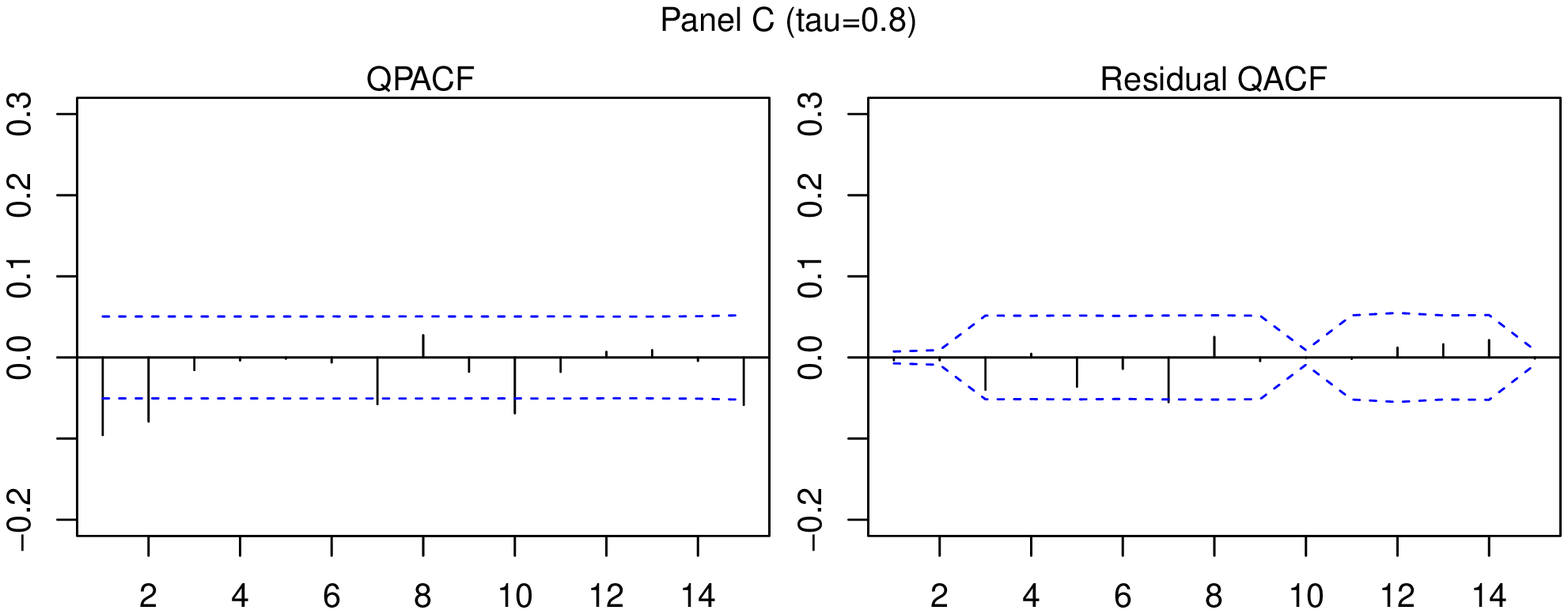}
\caption{\label{nasdaq_fit} The sample QPACF of daily closing prices on the Nasdaq Composite and the sample QACF of residuals from the fitted models for $\tau=0.2$, 0.5, and 0.8. The
dashed lines in the left and right panels correspond to  $\pm 1.96\sqrt{\widehat{\Omega}_3/n}$ and $\pm 1.96\sqrt{\widehat{\Omega}_5/n}$, respectively.}
\end{figure}

\end{document}